\tikzstyle{vertex}=[circle, draw, inner sep=0pt, minimum size=6pt]
\newcolumntype{M}[1]{>{\centering\arraybackslash}m{#1}}
    \newcommand{\thickhline}{%
        \noalign {\ifnum 0=`}\fi \hrule height 1pt
        \futurelet \reserved@a \@xhline
    }
    \newcolumntype{"}{@{\vrule width 1pt}}
\newtheorem{theorem}{Theorem}
\newtheorem{lemma}[theorem]{Lemma}
\mathchardef\mhyphen="2D
\newcommand\MSD{\text{SAL}}
\newcommand\CSD{\text{CSD}}
\newcommand\ST{\text{ST}}
\title{\textbf{An Efficient Representation for\\ Filtrations of Simplicial Complexes}}
\date{}
\author{Jean-Daniel Boissonnat\thanks{This work was partially supported by the Advanced Grant of the European Research Council GUDHI (Geometric Understanding in Higher Dimensions). }\\
INRIA Sophia Antipolis - M\'{e}diterran\'{e}e,
France. \\
\texttt{Jean-Daniel.Boissonnat@inria.fr}. \vspace{0.5cm}
\and 
Karthik C.\ S.\footnote{ This work was partially supported by Irit Dinur's ERC-StG grant number 239985.  } \\
Department of Computer Science and Applied Mathematics,\\
 Weizmann Institute of Science,
  Israel.\\
   \texttt{karthik.srikanta@weizmann.ac.il}.
}
\begin{document}
\maketitle
\vspace{-1cm}
\begin{abstract}
A filtration over a simplicial complex $K$ is an ordering of the simplices of $K$ such that all prefixes in the ordering are subcomplexes of $K$. Filtrations are at the core of Persistent Homology, a major tool in Topological Data Analysis. In order to represent the filtration of a simplicial complex, the entire filtration can be appended to any data structure that explicitly stores all the simplices of the complex  such as the Hasse diagram  or the recently introduced Simplex Tree [Algorithmica '14]. However, with the popularity of various computational methods that need to handle simplicial complexes, and with the rapidly increasing size of the complexes, the task of finding a compact data structure that can still support efficient queries is of great interest. \\

This direction has been recently pursued for the case of maintaining simplicial complexes. For instance, Boissonnat et al.\ [Algorithmica '17] considered storing the simplices that are maximal with respect to inclusion and Attali et al.\ [IJCGA '12] considered storing the simplices that block the expansion of the complex. 
Nevertheless, so far there has been no data structure that compactly stores the \emph{filtration} of a simplicial complex, while also allowing the efficient implementation of basic operations on the complex.\\

In this paper, we propose a new data structure called the Critical Simplex Diagram (CSD) which is a variant of the Simplex Array List (SAL) [Algorithmica '17]. Our data structure allows one to store in a compact way the filtration of a simplicial complex, and allows for the efficient implementation of a large range of basic operations. Moreover, we prove that our data structure is essentially optimal with respect to the requisite storage space. Finally, we show that the CSD representation admits fast construction algorithms for Flag complexes and  relaxed Delaunay complexes.
\end{abstract}
\clearpage
\section{Introduction}
{Simplicial complexes are the prime objects to represent topological spaces. The notion of filtration of a simplicial complex has been introduced to allow the representation of topological spaces at various scales and to distinguish between the true features of a space and artifacts arising from bad sampling, noise, or a particular choice of parameters~\cite{EH10}.  The most popular filtrations are nested sequences of increasing simplicial complexes but more advanced types of filtrations have been studied where consecutive complexes are mapped using more general simplicial maps~\cite{DFW14}. Filtrations are at the core of Persistent Homology, a major tool in the emerging field of Topological Data Analysis. 
}

A central question in Computational Topology and Topological Data Analysis is thus to represent simplicial complexes and filtrations efficiently.  The most common representation of simplicial complexes uses the Hasse diagram of the complex that has one node per simplex and an edge between any pair of incident simplices whose dimensions differ by one.  A more compact data structure, called Simplex Tree (ST), was proposed recently by Boissonnat and Maria \cite{SimplexTree}. The nodes of both the Hasse diagram and ST are in bijection with the simplices (of all dimensions) of the simplicial complex. In this way, they explicitly store all the simplices of the complex and it is easy to attach information to each simplex (such as a filtration value).  In particular, they allow one to store in an easy way the filtration of complexes. % which are at the core of Persistent Homology and Topological Data Analysis.

However, such data structures are typically very big, and they are not sensitive to the underlying structure of the complexes. This motivated the design of more compact data structures that represent only a sufficient subset of the simplices. A first idea is to store the 1-skeleton of the complex
together with a set of blockers that prevent the expansion of the complex~\cite{DataStructure3}.  A dual idea is to store only the simplices that are maximal with respect to inclusion. Following this last idea,
Boissonnat et al.\ \cite{BKT15} introduced a new data structure, called the Simplex Array List, which was the first data structure whose size  and query time  were sensitive to the geometry of the simplicial complex.  % (i.e., dependence on $k$ and$\Gamma_i$ - see definitions in \cite{BKT15}),
%and were theoretically analyzed to be better than that of
SAL was shown to outperform  ST for a large class of simplicial complexes.

Although very efficient, SAL, as well as  data structures that do not  explicitly store all the simplices of a complex, makes the representation of filtrations problematic, and in the case of SAL, impossible. In this paper, we introduce a new data structure called Critical Simplex Diagram (CSD) which has some similarity with SAL. CSD only stores the critical simplices, i.e., those simplices all of whose cofaces have a higher filtration value, and in this paper, we overcome the problems arising due to the implicit representation of  simplicial complexes, by showing that the basic operations on simplicial complexes can be performed efficiently using CSD.

\subsection{Our Contribution}
At a high level, our main contribution through this paper is to develop a new perspective for the design of data structures representing simplicial complexes associated with a filtration. Previous data structures such as the Simplex Tree interpreted a simplicial complex as a set of strings defined over the set of labels of its vertices and the filtration values as keys associated with each string. When a simplicial complex is perceived this way, a trie is indeed a natural data structure to represent the complex.  However, this way of representing simplicial complexes doesn't make use of the fact that simplicial complexes are not arbitrary sets of strings but are constrained by a lot of combinatorial structure. In particular, simplicial complexes are closed under subsets and also (standard) filtrations are monotone  functions\footnote{By monotone function we mean that the filtration value of a coface of any simplex is at least the filtration value of the simplex.}. 

We exploit this structure by viewing a filtered simplicial complex with a filtration range of size $t$ as a monotone function from $\{0,1\}^{|V|}$ to $\{0,1,\ldots,t\}$, where $V$ is the vertex set. We note that if a simplex is mapped to $t$ then, the simplex is understood to be not in the complex and if not, the mapping is taken to correspond to the filtration value of the simplex. In light of this viewpoint, we propose a data structure (CSD) which stores only the critical elements in the domain, i.e. those elements all of whose supersets (cofaces in the complex) are mapped to a strictly larger value. As a result, we are able to store the data regarding a simplicial complex more compactly. More concretely, we have the following result.

\begin{theorem}\label{main}
Let $K$ be a $d$-dimensional simplicial complex. Let $\kappa$ be the number of critical simplices in the complex. The data structure CSD representing $K$ admits the following properties:
\begin{enumerate}
\item The size of CSD is  $\tilde{\mathcal{O}}(\kappa d)$.
\item The cost of basic operations (such as membership, insertion, removal, elementary collapse, etc.) through the CSD representation is $\tilde{\mathcal{O}}((\kappa \cdot d)^2)$. 
\end{enumerate}
\end{theorem}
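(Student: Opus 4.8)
The plan is to first pin down the layout of \CSD, then read off the size bound, and finally handle each basic operation by reducing it to a bounded number of linear scans of the structure. Following the similarity with \MSD announced above, I would describe \CSD as a collection of $\kappa$ \emph{rows}, one per critical simplex, where the row of a critical simplex $\tau$ lists the labels of the $\dim(\tau)+1$ vertices of $\tau$ in increasing order together with the value $f(\tau)$, and where, exactly as in \MSD, for every vertex label $v$ the occurrences of $v$ across all rows are threaded into a doubly linked list (the cross-links), which is what makes the \CL and \UP primitives efficient. The size bound is then immediate: there are $\kappa$ rows, each storing at most $d$ label--pointer pairs, and each pointer belongs to exactly one cross-link, so a direct count gives $\mathcal{O}(\kappa d)$ words; a tighter accounting of exactly which labels and links must be materialised yields the stated $\kappa d$.

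Next I would record the structural fact underlying the membership query: for every simplex $\sigma$ one has $f(\sigma)=\min\{\,f(\tau)\;:\;\tau\text{ critical},\ \sigma\subseteq\tau\,\}$, with the convention that $\sigma\notin K$ when the set on the right is empty. Indeed monotonicity gives $f(\tau)\ge f(\sigma)$ for every coface $\tau$, and if $\sigma$ is not already critical then by definition it has a coface of the same filtration value; iterating, one reaches a critical coface of $\sigma$ whose value equals $f(\sigma)$. Consequently a membership/value query for $\sigma$ amounts to scanning the $\kappa$ rows, testing $\sigma\subseteq\tau$ for each stored $\tau$ --- one application of \CL, costing $\tilde{\mathcal{O}}(d)$ per row using the sorted labels and the cross-links --- and returning the smallest value found; the total is $\tilde{\mathcal{O}}(\kappa d)$, comfortably inside $\tilde{\mathcal{O}}((\kappa d)^2)$.

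For insertion and removal the crux is to control how much the set of critical simplices moves, and I would prove two claims, both by the same ``walk up a chain of equal-value cofaces'' argument combined with monotonicity. First, inserting $\sigma$ with value $i$ (all faces of $\sigma$ already present, $i$ at least their values) changes the critical set only by adding $\sigma$ and deleting those \emph{facets} $\rho$ of $\sigma$ that were critical with $f(\rho)=i$: no simplex other than $\sigma$ can gain criticality, and any $\rho\subsetneq\sigma$ losing it with $f(\rho)=i$ must be a facet, since a strictly intermediate $\mu$ with $\rho\subsetneq\mu\subsetneq\sigma$ would already force $f(\mu)=i$ and hence $\rho$ non-critical beforehand. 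Second, removing a maximal (hence critical) simplex $\sigma$ changes the critical set only by deleting $\sigma$ and promoting those facets $\rho$ of $\sigma$ whose \emph{only} coface of value $f(\rho)$ in $K$ was $\sigma$; again such $\rho$ must be a facet, and whether $\rho$ is actually promoted is decided by one scan checking that no row $\tau\neq\sigma$ has $\rho\subsetneq\tau$ and $f(\tau)=f(\rho)$. In either case there are at most $d+1$ candidate rows to create or destroy, each creation/destruction is $\mathcal{O}(d)$ pointer updates via \UP, and each candidate needs one $\tilde{\mathcal{O}}(\kappa d)$ scan to settle its status, for a total of $\tilde{\mathcal{O}}(\kappa d^2)\subseteq\tilde{\mathcal{O}}((\kappa d)^2)$. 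Elementary collapse of a free pair $(\tau,\sigma)$ is then the composition of the removal analysis applied to $\sigma$ and to $\tau$, with the same bound, and the remaining basic operations are variations on these.

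The step I expect to be the main obstacle is getting the two criticality claims and the accompanying filtration bookkeeping exactly right in the degenerate regimes --- in particular when many simplices share one filtration value (so that $\kappa$ can be as small as $1$ while $d$ is large, which already stresses the size statement), and when several equal-value cofaces compete, so that ``$\sigma$ was the only reason $\rho$ is non-critical'' must be tested rather than assumed. A second, more implementation-flavoured obstacle is to carry the \CL/\UP machinery and its amortised analysis over from \MSD so that every subset/coface test inside a row really runs in near-linear time; this is precisely where the slack in the stated $\tilde{\mathcal{O}}((\kappa d)^2)$ bound, as opposed to something smaller, is spent.
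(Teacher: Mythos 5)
Your proposal is sound and lands on the same conceptual hinges as the paper --- storing only critical simplices, the identity $f(\sigma)=\min\{f(\tau):\tau\text{ critical},\ \sigma\subseteq\tau\}$, and the two local-criticality claims that only facets of $\sigma$ can lose/gain criticality under an insertion/removal (the intermediate-simplex argument you give is exactly right). What differs is the concrete layout and, as a result, the per-operation constants. The paper's CSD is organized as $n$ vertex-indexed arrays $A_1,\ldots,A_n$ (your ``cross-links''), each split into a maximal part $A_i^\star$ and a remainder, and each implemented as a pair of nested red-black trees keyed by (filtration value, hash); critical simplices appear as star graphs linking entries across the arrays. Queries are answered by intersecting the $d_\sigma+1$ relevant arrays, which costs $O(d_\sigma\Gamma_0\log k)$ for membership and $O(d_\sigma\Psi\log\Psi)$ for filtration access, where $\Gamma_0\le\Psi\le\kappa$ are the max number of maximal (resp.\ critical) cofaces of a vertex. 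Your transposed ``$\kappa$ rows plus linked cross-links'' description stores the same information but answers queries by a full scan of all $\kappa$ rows, giving $\tilde{\mathcal{O}}(\kappa d)$ rather than the paper's vertex-local $\tilde{\mathcal{O}}(\Psi d)$ or $\tilde{\mathcal{O}}(\Gamma_0 d)$; this is strictly looser but still well inside the theorem's $\tilde{\mathcal{O}}((\kappa d)^2)$ target, so the proof goes through. Two smaller points of divergence: the paper's removal is of an arbitrary face together with all its cofaces (which may number up to $\Gamma_{d_\sigma}$), so your per-maximal-simplex analysis has to be iterated over the coface list to match it, and your elementary-collapse step should be phrased via removing $\tau$ first (which is maximal) and then $\sigma$; both adjustments are routine and stay within the stated bound. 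Note also the paper's $\kappa d$ size bound is really $\sum_{\sigma\in M}(d_\sigma+1)\le\kappa(d+1)$ --- an off-by-one the paper itself glosses over, so you are not wrong to mirror it, but it is worth being aware that the clean count is $\kappa(d+1)$ nodes.
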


The proof of the above two properties follows from the discussions in Section~\ref{sizeofCSD} and Section~\ref{sec:operations} respectively. We would like to point out here that while the cost of static operations such as membership is only $\tilde{\mathcal{O}}(d)$ for the Simplex Tree, to perform any dynamic operation such as insertion or removal, the Simplex Tree requires $\text{exp}(d)$ time. 

As a direct consequence of representing a simplicial complex only through the critical simplices, %we note that 
the construction of any simplicial complex with filtration, will be very efficient through CSD, simply because we have to build a smaller data structure  as  compared to the existing data structures. This is shown for flag complexes and relaxed Delaunay complexes.

\section{Preliminaries}\label{Preliminaries}

A simplicial complex $K$ is defined over a (finite) vertex set $V$ whose elements are called the vertices of $K$ and is a set of non-empty subsets of $V$ that is required to satisfy the following two conditions:
\begin{enumerate}
\item $p\in V\Rightarrow \{p\}\in K$
\item $\sigma\in K, \tau\subseteq\sigma\Rightarrow\tau\in K$
\end{enumerate}

Each element $\sigma\in K$ is called a simplex or a face of $K$. {The dimension $d_{\sigma}$ of $\sigma$ is equal to its number of vertices minus 1. % and, if $\sigma\in K$ has precisely $s + 1$ elements $(s \ge -1)$, $\sigma$ is called an $s$-simplex and the dimension of $\sigma$, denoted by $d_\sigma$ is $s$. 
A simplex of dimension $d$ is also called a $d$-simplex for short.}
  The dimension of the simplicial complex $K$ is the largest $d$ such that it contains a $d$-simplex.

A \emph{face} of a simplex $\sigma = \{p_0 ,..., p_s \}$ is a simplex whose vertices form a subset of $\{p_0 ,..., p_s \}$. A proper face is a face different from $\sigma$ and the facets of $\sigma$ are its proper faces of maximal dimension. A simplex $\tau\in K$ having $\sigma$ as a face is called a \emph{coface} of $\sigma$. {In some places in the paper, we will not specify if a face or coface is proper, as it should be clear from the context.}

A maximal simplex of a simplicial complex is a simplex that has no cofaces. 
%To be precise, $\sigma\in K$ is a maximal simplex iff there is no $\sigma^\prime\in K$ such that $\sigma\subset\sigma^\prime$. 
A simplicial complex is pure if all its maximal simplices are of the
same dimension. Also, a \emph{free pair} is defined as a pair of simplices
$(\tau,\sigma)$ in $K$ where $\tau$ is the only coface of $\sigma$.

In Figure~\ref{fig:SimplicialComplexExample} we see a three dimensional simplicial complex on the vertex set $\{1,2,3,4,5,6\}$. This complex has two maximal simplices: the tetrahedron $[1234]$ and  the triangle $[356]$. We use this complex as an example throughout the paper.  

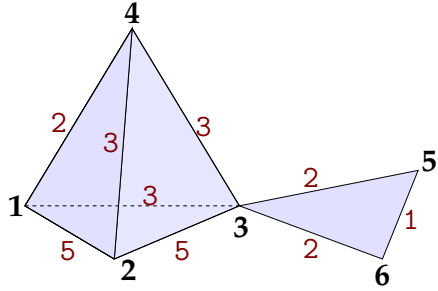
\begin{figure}[!h]
\centering
\resizebox{6cm}{!}{
\begin{tikzpicture}[-,>=stealth',shorten >=0.5pt,auto,node distance=2cm,
 thick,main node/.style={circle,fill=blue!10,draw,font=\sffamily\large\bfseries}]
\draw [fill=blue!10!white] (0,0)--(3,5) --(6,0)-- (2.5,-1.5) --cycle; 
\draw [fill=blue!15!white] (2.5,-1.5)--(3,5) --(0,0) --cycle; 
\draw [fill=blue!12!white] (2.5,-1.5)--(3,5) --(0,0) --cycle; 
\draw [fill=blue!12!white] (6,0)--(11,1) --(10,-1.5) --cycle; 
%\draw [fill=blue!8!white] (6,0) --(8.75,4.75)-- (3,5) --cycle; 
\Huge
\draw [dashed] (0,0) -- (6,0);
\draw [-] (0,0) -- (3,5) -- (6,0);
\draw [-] (0,0) -- (2.5,-1.5) -- (6,0);
%\draw [-] (3,5) -- (2.5,-1.5) -- (3,-5);
%\draw [-] (0,0) -- (3,-5) -- (6,0);
%\draw [-] (3,5) -- (8.75,4.75) -- (6,0);
\node at (-0.25,0) {\textbf{1}};
\node at (6.05,-0.6) {\textbf{3}};
\node at (3,5.45) {\textbf{4}};
\node at (11.3,1.25) {\textbf{5}};
\node at (10,-1.95) {\textbf{6}};
\node at (2.9,-1.8) {\textbf{2}};
\node at (1.2,-1.25) {\color{red!50!black}\texttt{5}};
\node at (4.4,-1.25) {\color{red!50!black}\texttt{5}};
\node at (0.95,2.35) {\color{red!50!black}\texttt{2}};
\node at (5,2.25) {\color{red!50!black}\texttt{3}};
\node at (3.5,0.3) {\color{red!50!black}\texttt{3}};
\node at (2.4,1.8) {\color{red!50!black}\texttt{3}};
\node at (8,0.8) {\color{red!50!black}\texttt{2}};
\node at (8,-1.2) {\color{red!50!black}\texttt{2}};
\node at (10.8,-0.4) {\color{red!50!black}\texttt{1}};

\end{tikzpicture}
}
\caption{A simplicial complex with two maximal simplices~: tetrahedron $[1234]$ and triangle $[356]$. The filtration value of all vertices is $0$. Filtration values of edges are shown in the figure. The filtration value of a higher dimensional simplex is the largest filtration value of its edges.}
\label{fig:SimplicialComplexExample}
\end{figure}

We adopt the following notation: %through out the paper: 
$[t]:=\{1,\ldots ,t\}$ and $\llbracket t\rrbracket=\{0,1,\ldots ,t\}$. A {\em filtration} of a complex is a function $f : K \to \mathbb{R}$ satisfying $f(\tau ) \le f(\sigma)$
whenever $\tau\subseteq\sigma$ \cite{EH10}. Moreover, we will assume that the  filtration values range over $\llbracket t\rrbracket$. We say that a simplex $\sigma\in K$ is a \emph{critical simplex} if for all cofaces $\tau$ of $\sigma$ we have $f(\sigma)<f(\tau)$. For example, the critical simplices in the example described in Figure~\ref{fig:SimplicialComplexExample} are all the vertices, the edges $[56], [14], $ and $[24]$, the triangles $[356]$ and $[134]$, and the tetrahedron $[1234]$.

\paragraph{Notations.} In this paper, the class of simplicial complexes of
$n$ vertices and dimension $d$  with $k$ maximal simplices out of the $m$ simplices in the complex is denoted by
${\cal K}(n,d,k,m)$,  and $K$ denotes a simplicial complex in $ {\cal
  K} (n,d,k,m)$. % (unless explicitly stated otherwise)
 
\subsection{Lower Bounds}  \label{Lower Bounds}
  Boissonnat et al.\ proved the following lower bound on the space needed to represent simplicial
complexes \cite{BKT15}.
\begin{theorem}
\label{Lowerbound}
\cite{BKT15}
Consider the class of all $d$-dimensional simplicial complexes with $n$ vertices containing $k$ maximal simplices, where $d\ge2$ and $k\ge n+1$, and consider 
any data structure that can represent the simplicial complexes of this class. Such a data structure requires $\log{\binom{\binom{n/2}{d+1}}{k-n}}$ bits to be stored. For any constant  $\varepsilon\in (0,1)$ and for  $\frac{2}{\varepsilon}n\le k\le n^{(1-\varepsilon)d}$, $d\le n^{\varepsilon/3}$,  the bound becomes $\Omega(kd\log n)$.
\end{theorem}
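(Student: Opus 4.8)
The plan is to prove the bound by the standard counting (information-theoretic) argument: any data structure that can represent every complex in a class $\mathcal{C}$ must, in the worst case, use at least $\log_2|\mathcal{C}|$ bits, since otherwise two distinct members of $\mathcal{C}$ would receive identical representations and could not both be recovered. So it suffices to exhibit a subfamily $\mathcal{F}$ of the stated class with $|\mathcal{F}| = \binom{\binom{n/2}{d+1}}{k-n}$ whose members are pairwise distinct, and then to estimate $\log_2|\mathcal{F}|$ under the given regime of parameters.

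For the construction, I would first fix a ``padding'' complex $P$ on the vertex set $V=[n]$ that (i) uses all $n$ vertices, (ii) has exactly $n$ maximal faces, all of dimension $1$, and (iii) is such that the subset $A := \{1,\dots,\lfloor n/2\rfloor\}$ is independent in $P$. Concretely, $P$ can be taken to be a triangle-free graph on $[n]$ with $n$ edges covering every vertex and containing no edge inside $A$ --- for instance, a perfect matching between $A$ and part of $B:=[n]\setminus A$, together with a single cycle through $B$; this is possible once $n$ is large relative to $d$, which is ensured by $d\le n^{\varepsilon/3}$. Now, for every set $S$ of $k-n$ distinct $(d{+}1)$-subsets of $A$ --- there are $\binom{\binom{n/2}{d+1}}{k-n}$ such sets --- let $K_S$ be the complex obtained by adding to $P$ the $d$-simplices in $S$ and all their faces. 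Since $d\ge 2$ and no edge of $P$ has both endpoints in $A$, no edge of $P$ is a face of any simplex of $S$, and conversely no simplex of $S$ can be a face of a simplex of $P$; moreover the vertices and the proper faces of the $S$-simplices are never maximal in $K_S$. Hence the maximal faces of $K_S$ are exactly the $n$ edges of $P$ together with the $k-n$ simplices of $S$, so $K_S$ has exactly $k$ maximal faces, dimension exactly $d$ (here the hypothesis $k\ge n+1$, i.e.\ $|S|\ge 1$, is used), and vertex set exactly $[n]$. Thus every $K_S$ lies in the class, and since the $d$-dimensional maximal faces of $K_S$ recover $S$, the map $S\mapsto K_S$ is injective; therefore $|\mathcal{F}|=\binom{\binom{n/2}{d+1}}{k-n}$ and the claimed $\log\binom{\binom{n/2}{d+1}}{k-n}$ bit lower bound follows.

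It remains to derive the $\Omega(kd\log n)$ estimate. Using $\binom{a}{b}\ge (a/b)^b$ gives $\log_2|\mathcal{F}|\ge (k-n)\log_2\!\big(\binom{n/2}{d+1}/(k-n)\big)$. The hypothesis $k\ge \frac{2}{\varepsilon}n$ forces $n\le\frac{\varepsilon}{2}k$, hence $k-n\ge(1-\tfrac{\varepsilon}{2})k\ge k/2$, so the leading factor is $\Theta(k)$. For the logarithm, bound $\binom{n/2}{d+1}\ge\big(\frac{n}{2(d+1)}\big)^{d+1}$ and $k-n\le k\le n^{(1-\varepsilon)d}$ to obtain $\binom{n/2}{d+1}/(k-n)\ge n^{1+\varepsilon d}/\big(2^{d+1}(d+1)^{d+1}\big)$; taking logarithms and using $d\le n^{\varepsilon/3}$ (so that $(d+1)\log_2(d+1)\le (d+1)+\frac{\varepsilon(d+1)}{3}\log_2 n$, which is dominated by $\varepsilon d\log_2 n$ once $n$ is large) a short calculation yields $\log_2\!\big(\binom{n/2}{d+1}/(k-n)\big)=\Omega(\varepsilon d\log n)=\Omega(d\log n)$. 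Multiplying by $k-n=\Omega(k)$ gives $\Omega(kd\log n)$.

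The routine parts are the existence of a suitable padding graph $P$ and the arithmetic of the last paragraph. The one place needing genuine care --- and where I expect the main obstacle to lie --- is keeping the family honestly inside the class $\mathcal{K}(n,d,k,\cdot)$: one must simultaneously pin the vertex count to exactly $n$, the number of maximal faces to exactly $k$, and the dimension to exactly $d$, while still having $\binom{\binom{n/2}{d+1}}{k-n}$ pairwise-distinguishable members. The role of the independence of $A$ in $P$ (so that adding the $S$-simplices neither creates nor destroys any maximal face of $P$), together with the constraints $k\ge n+1$ and $d\le n^{\varepsilon/3}$, is precisely to make this bookkeeping go through.
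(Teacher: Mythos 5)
Your proof is correct and follows essentially the same strategy as the one the paper adapts in its proof of Theorem~\ref{CSDLowerBound}: count the $\binom{\binom{n/2}{d+1}}{k-n}$ complexes obtained by choosing $k-n$ maximal $d$-simplices from one half of the vertex set, pad with a fixed set of $n$ maximal edges to hit the exact vertex and maximal-face counts, then apply $\binom{a}{b}\ge(a/b)^b$ together with the parameter regime to get $\Omega(kd\log n)$. The only (inessential) organizational difference is that you keep the padding graph $P$ fixed and enforce independence of $A$ in $P$, whereas the paper instead re-builds the 1-dimensional padding on the vertices not used by each $\Gamma_i$; your variant is a bit cleaner but the counting and the arithmetic are the same.
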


We prove now a lower bound on the representation of filtrations of simplicial complexes.

\begin{lemma}\label{FiltrationLowerbound}
Let $\beta=\left\lfloor\frac{t+1}{d+1}\right\rfloor$ be greater than 1. For any simplicial complex $K$ of dimension $d$ containing $m$ simplices, the number of distinct filtrations $f:K\to\llbracket t\rrbracket$ is at least $\beta^{m}$. If $\beta>(d+1)^{\delta}$ for some constant $\delta>0$ then, any data structure that can represent filtrations of the class of all $d$-dimensional simplicial complexes containing $m$ simplices requires $\Omega(m\log t)$ bits to be stored. 
\end{lemma}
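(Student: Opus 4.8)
The plan is to establish the combinatorial lower bound first, and then derive the information-theoretic storage bound as an immediate corollary via a counting argument.

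For the first part, I would exhibit many distinct filtrations $f\colon K \to \llbracket t\rrbracket$ by assigning filtration values dimension by dimension. The key constraint is monotonicity: if $\tau \subseteq \sigma$ then $f(\tau) \le f(\sigma)$. A clean way to guarantee this while retaining lots of freedom is to partition the value range $\llbracket t \rrbracket$, which has $t+1$ values, into $d+1$ consecutive blocks, each of size $\beta = \lfloor (t+1)/(d+1) \rfloor$ (discarding leftover values). I would then stipulate that every simplex of dimension $i$ receives a value from the $i$-th block. Since the $i$-th block lies entirely above the $(i-1)$-th block, and a face of an $i$-simplex has dimension at most $i-1$, monotonicity is automatic no matter which value within its block each simplex is given. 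Hence each of the $m$ simplices can independently be assigned any of $\beta$ values, yielding at least $\beta^{m}$ distinct filtrations. (One should check the boundary case that faces of dimension strictly less than $i-1$ also get smaller values, which holds because their block is even lower; and vertices, being dimension $0$, land in the bottom block, so there is no issue at the base.)

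For the second part, I would invoke the standard information-theoretic principle: any data structure capable of distinguishing $N$ objects must use at least $\log_2 N$ bits in the worst case, since the encoding must be injective. Applying this with $N \ge \beta^{m}$ gives a lower bound of $m \log_2 \beta$ bits. Under the hypothesis $\beta > (d+1)^{\delta}$, we have $\log \beta > \delta \log(d+1)$. To convert this into the claimed $\Omega(m \log t)$ bound, I would relate $\log(d+1)$ back to $\log t$: from $\beta = \lfloor (t+1)/(d+1)\rfloor > (d+1)^\delta$ one gets $t+1 > (d+1)^{\delta+1}$, hence $\log(t+1) > (\delta+1)\log(d+1)$, so $\log(d+1) > \frac{1}{\delta+1}\log(t+1)$ and therefore $\log \beta > \frac{\delta}{\delta+1}\log(t+1) = \Omega(\log t)$. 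Combining, the storage requirement is $m \log \beta = \Omega(m \log t)$ bits.

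I do not expect any serious obstacle here; the argument is elementary. The only point requiring a little care is the bookkeeping in the block construction — making sure the blocks are well-defined (this is exactly where the hypothesis $\beta > 1$ is used, so that each block is nonempty and offers a genuine choice) and verifying monotonicity across all face relations, not just facet relations. The asymptotic manipulation in the second part is routine, but I would state the chain of inequalities explicitly so the constant hidden in $\Omega(\cdot)$ is seen to depend only on $\delta$.
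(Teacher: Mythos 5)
Your first part is correct and is essentially the paper's own argument: the paper's explicit formula $f_i(\sigma)=d_\sigma\cdot\beta+b(i)_{g(\sigma)}+1$ is precisely the ``blocks of size $\beta$ indexed by dimension'' construction you describe, with the $\beta^m$ independent choices encoded as base-$\beta$ digits. No issues there.

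The second part, however, has a genuine error in the chain of inequalities. From $t+1 > (d+1)^{\delta+1}$ you correctly get $\log(t+1) > (\delta+1)\log(d+1)$, but dividing by $\delta+1>0$ yields $\log(d+1) < \frac{1}{\delta+1}\log(t+1)$ — the \emph{opposite} of the inequality you wrote. This is not a cosmetic typo: combining $\log\beta > \delta\log(d+1)$ with the (correct) upper bound on $\log(d+1)$ gives you nothing, and indeed $\log\beta > \delta\log(d+1)$ on its own cannot yield $\Omega(\log t)$ — take $d$ constant (say $d=1$) and $t\to\infty$; then $\delta\log(d+1)$ is a constant while $\log t$ grows. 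What you are missing is the \emph{lower} bound coming from the floor: since $\beta=\lfloor(t+1)/(d+1)\rfloor$, we also have $(\beta+1)(d+1) > t+1$, i.e.\ $\log(d+1) > \log(t+1)-\log(\beta+1)$. Plugging this into $\log\beta > \delta\log(d+1)$ gives $\log\beta > \delta\log(t+1)-\delta\log(\beta+1) \geq \delta\log(t+1)-\delta(\log\beta+1)$ (using $\beta\geq 2$), hence $(1+\delta)\log\beta > \delta\log(t+1)-\delta$ and $\log\beta > \frac{\delta}{1+\delta}\left(\log(t+1)-1\right) = \Omega(\log t)$. This is, up to the bookkeeping, the computation the paper carries out via $\beta > (t+1)^\delta/(\beta+1)^\delta$.
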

%Before proving the above theorem, we recollect the definition of the data structure Simplex Tree \cite{SimplexTree}. The Simplex Tree is a Trie where each simplex $\sigma$ is thought of as a word on the alphabet set $[n]$ - the labels of the vertices of the complex. Specifically, a $j$-simplex of $K$ is uniquely represented as the word of length $j + 1$ consisting of the ordered set of the labels of its $j + 1$ vertices.
\begin{proof}%[Proof of Theorem~\ref{FiltrationLowerbound}]
Let us fix a simplicial complex $K$ of dimension $d$ containing $m$ simplices.  We will now build functions $f_i:K\to\llbracket t\rrbracket$. For every $i\in \left\llbracket{\beta}^m-1\right\rrbracket$, let $b(i)$ be the representation of $i$ as an $m$ digit number in base $\beta$ and let $b(i)_j$ be the $j^{\text{th}}$ digit of $b(i)$. Let $g$ be a bijection from $K$ to $[m]$. We define $f_i(\sigma)=d_{\sigma}\cdot\beta +b(i)_{g(\sigma)}+1$. We note that all the $f_i$s are distinct functions as for any two distinct numbers $i$ and $j$ in $\left\llbracket{\beta}^m-1\right\rrbracket$, we have that $b(i)\neq b(j)$. Finally, we note that each of the $f_i$s is a  filtration of $K$. This is because, for any two simplices $\tau,\sigma\in K$, such that $\tau\subset\sigma$ (i.e., $d_\tau<d_\sigma$), and any $i\in \left\llbracket{\beta}^m-1\right\rrbracket$, we have that $f_i(\tau)\le (d_{\tau}+1)\cdot\beta < d_{\sigma}\cdot\beta+1\le f_i(\sigma)$.

It follows that there are at least $\beta^{m}$ distinct filtrations of $K$. By the pigeonhole principle, we have that  any data structure that can represent  filtrations of $K$  requires $\log\left(\beta^{m}\right)$ bits. It follows that if $\beta>(d+1)^{\delta}$ for some constant $\delta>0$ then $\beta>\frac{(t+1)^\delta}{(\beta-1)^\delta}$. Thus, 
any data structure that can represent filtrations of $K$ requires at least $\frac{\delta}{1+\delta}\cdot m\log (t+1)=\Omega(m\log t)$ bits to be stored. 
\end{proof}

Even if $\left\lfloor\frac{t}{d}\right\rfloor \le 1 $, we can show that any data structure that can represent $d$-dimensional simplicial complexes  containing $m$ simplices with filtration range $\llbracket t\rrbracket$ requires $\Omega(\frac{m\sqrt{t}}{d}\log t)$ bits. This can be shown by modifying the above proof as follows. Let $S_j$ be the set of all simplices of dimension $j-1$. We identify a subset $D$ of $[d]$ of size $\sqrt{t}$, such that $\sum_{j\in D}|S_{j}|$ is at least $\frac{m\sqrt{t}}{d}$. Therefore, for every set $S_j$, $j\in D$, we can associate $\sqrt{t}$ distinct filtration values, which leads to the lower bound.

The lower bound in Lemma~\ref{FiltrationLowerbound} is not sensitive to the number of critical simplices, and intuitively, any lower bound on the size of data structures storing complexes with filtrations needs to capture the number of critical simplices as a parameter. We adapt the proof of Theorem~\ref{Lowerbound} and combine it with the ideas from the proof of Lemma~\ref{FiltrationLowerbound}, to obtain the following lower bound. {Intuitively, the theorem says that given $\kappa$ critical simplices, with no other restrictions, it is not possible to find a representation shorter than an enumeration of these critical simplices}, each of these critical simplices being represented with $d\log n + \log t$ bits. % per critical simplex needed for representation). }

\begin{theorem}\label{CSDLowerBound}
Consider the class of all simplicial complexes on $n$ vertices of dimension $d$, associated with a filtration over the range of $\llbracket t\rrbracket$, such that the number of critical simplices is $\kappa$, where $d\ge2$ and $\kappa\ge n+1$, and consider any data structure that can represent the simplicial complexes of this class. Such a data structure requires $\log\left(\binom{\binom{n/2}{d+1}}{\kappa-n}t^{\kappa-n}\right)$ bits to be stored. For any constant  $\varepsilon\in (0,1)$ and for  $\frac{2}{\varepsilon}n\le \kappa\le n^{(1-\varepsilon)d}$ and $d\le n^{\varepsilon/3}$,  the bound becomes $\Omega(\kappa (d\log n+\log t))$.
\end{theorem}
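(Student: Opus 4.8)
The plan is to prove the lower bound by an information-theoretic (counting) argument: exhibit a large family of pairwise distinct filtered complexes that all lie in the prescribed class, and then invoke the pigeonhole principle. The family is obtained by merging the construction underlying Theorem~\ref{Lowerbound} — which produces many complexes with a prescribed number of ``top'' simplices spanned by a restricted vertex set — with the filtration-labelling idea from the proof of Lemma~\ref{FiltrationLowerbound}, which attaches one of $t$ possible values to each such top simplex.

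Concretely, first I would fix a subset $U$ of $n/2$ of the $n$ vertices and, as in the proof of Theorem~\ref{Lowerbound}, consider all $\binom{\binom{n/2}{d+1}}{\kappa-n}$ ways of choosing a set $\mathcal{T}=\{T_1,\dots,T_{\kappa-n}\}$ of $\kappa-n$ distinct $d$-simplices spanned by $U$; let $K_{\mathcal{T}}$ be the downward closure of $\mathcal{T}$ together with the $n$ vertices. Distinct choices of $\mathcal{T}$ give distinct complexes, since a complex is determined by its set of maximal simplices, and each $K_{\mathcal{T}}$ is $d$-dimensional (using $\kappa\ge n+1$, so $\mathcal{T}\neq\varnothing$), has $n$ vertices, and has maximal simplices exactly the $d$-simplices of $\mathcal{T}$ and the isolated vertices. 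Next I attach a filtration: set $f(v)=0$ for every vertex $v$, choose for each $T_j$ a label $v_j\in\{1,\dots,t\}$, and for every remaining face $\sigma$ (of dimension $1$ through $d-1$) set $f(\sigma)=\min\{\,v_j : \sigma\subseteq T_j\,\}$. One checks $f$ is monotone with range in $\llbracket t\rrbracket$. The point is the critical count: every vertex is critical (its cofaces have value $\ge 1>0$, or it is isolated with no coface); every $T_j$ is critical (it is maximal, hence has no coface); and every intermediate face $\sigma$ is \emph{not} critical, because the $d$-simplex $T_j$ realising the minimum defining $f(\sigma)$ is a coface of $\sigma$ with $f(T_j)=f(\sigma)$. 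Hence $K_{\mathcal{T}}$ equipped with this filtration has exactly $n+(\kappa-n)=\kappa$ critical simplices and belongs to the stated class.

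Then I would count: over the $\binom{\binom{n/2}{d+1}}{\kappa-n}$ choices of $\mathcal{T}$ and, for each, the $t^{\kappa-n}$ choices of labels $(v_1,\dots,v_{\kappa-n})$, one gets $\binom{\binom{n/2}{d+1}}{\kappa-n}\,t^{\kappa-n}$ pairwise distinct objects in the class (two label vectors that differ yield filtrations differing on some $T_j$). By pigeonhole, any data structure representing this class needs at least $\log\!\left(\binom{\binom{n/2}{d+1}}{\kappa-n}\,t^{\kappa-n}\right)$ bits in the worst case. For the asymptotic form I split this as $\log\binom{\binom{n/2}{d+1}}{\kappa-n}+(\kappa-n)\log t$: the first term is $\Omega(\kappa d\log n)$ directly by the estimate in Theorem~\ref{Lowerbound} applied with $k=\kappa$ (that estimate depends only on $n,d,\kappa$), and for the second term the hypothesis $\kappa\ge\frac{2}{\varepsilon}n$ gives $n\le\frac{\varepsilon}{2}\kappa\le\frac12\kappa$, hence $\kappa-n\ge\frac12\kappa$ and $(\kappa-n)\log t=\Omega(\kappa\log t)$; adding the two bounds yields $\Omega(\kappa(d\log n+\log t))$.

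The main obstacle — the only genuinely new point beyond the two ingredients — is the filtration design: the labels on the $\kappa-n$ top simplices must be freely and independently choosable (to harvest the full $t^{\kappa-n}$ factor) while the \emph{exact} critical count stays $\kappa$, i.e. every lower-dimensional face must remain non-critical for every choice of labels, including faces shared by several $T_j$. The ``propagate the minimum downward'' rule above is precisely what makes both requirements hold simultaneously; checking its monotonicity and the non-criticality of intermediate faces in all dimensions $1,\dots,d-1$ is the technical core, though routine.
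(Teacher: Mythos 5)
Your proof is correct, and the overall strategy is the same as the paper's: build a family of $\binom{\binom{n/2}{d+1}}{\kappa-n}\,t^{\kappa-n}$ pairwise-distinct filtered complexes by merging the Theorem~\ref{Lowerbound} choice of $d$-simplices with the Lemma~\ref{FiltrationLowerbound}-style independent labelling, then apply pigeonhole, and finally split the log into the two summands (reusing Theorem~\ref{Lowerbound}'s estimate for the binomial factor and $\kappa-n\geq\frac12\kappa$ for the $t$ factor). The one genuine difference is in how you pad the critical-simplex count from $\kappa-n$ up to $\kappa$. The paper builds, for each $\Gamma_i$, a $1$-dimensional gadget of exactly $n$ maximal edges (a spanning tree on $P\setminus\mathrm{vert}(\Gamma_i)$ plus extra edges) all at filtration value $0$, so that the $n$ critical simplices of small dimension are edges, while the vertices are never critical; you instead set $f(v)=0$ on all vertices and force every $T_j$ to take a value in $[t]=\{1,\dots,t\}$, so that all $n$ vertices are critical (including isolated ones, vacuously), and no edge gadget is needed. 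Your variant is simpler and avoids the bookkeeping around $r_i=\frac{n}{2}-|\mathrm{vert}(\Gamma_i)|$; it works precisely because Theorem~\ref{CSDLowerBound} counts critical simplices rather than maximal ones, so there is no need for the padding simplices to be maximal. The monotonicity and non-criticality of intermediate faces under the ``propagate the minimum downward'' rule, which you rightly flag as the only point requiring care, is indeed routine and matches the convention the paper also uses for non-maximal simplices.
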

\begin{proof} The proof of the first statement  is by contradiction. 
Let us define $h=\kappa-n\ge 1$ and suppose that there exists a data structure that can be stored using only $s<\log \alpha \stackrel{{\rm def}}{=}\log\left(\binom{\binom{n/2}{d+1}}{\kappa-n}t^{h}\right)$ bits. 
We will construct $\alpha$ simplicial complexes (associated with a filtration), all with the same set  $P$ of $n$ vertices,  the same dimension $d$, with exactly $\kappa$ maximal simplices, and with a filtration over the range of $\llbracket t \rrbracket$. By the pigeonhole principle, two different filtered simplicial complexes\footnote{i.e., two complexes which either differ on the simplices contained on the complex or on the filtration value of a simplex contained in both complexes.}, say $K$ and $K^\prime$, are encoded by the same word. So any algorithm will give the same answer for $K$ and $K^\prime$. But, by the construction of these complexes, there is either a simplex which is in $K$ and not in $K^\prime$ or there is a simplex whose filtration value in $K$ is different from the simplex's filtration value in $K^\prime$. This leads to a contradiction. 

The simplicial complexes and their associated filtration are constructed as follows. Let $P'\subset P$ be a subset of cardinality $n/2$, and consider the set of all possible simplicial complexes of dimension $d$ with vertices in $P'$ that contain  $h$ critical simplices. We further assume that all critical simplices have dimension $d$ exactly. These complexes are $\beta=\binom{\binom{n/2}{d+1}}{h}$ in number, since the total number of maximal $d$ dimensional simplices is $\binom{n/2}{d+1}$ and we choose $h$ of them. Let us call them $\Gamma_1,\ldots,\Gamma_\beta$.  We now extend each $\Gamma_i$ so as to obtain a simplicial complex whose vertex set is $P$ and has exactly $\kappa$ critical simplices. The critical simplices will consist of  the $h$ maximal simplices of dimension $d$ already constructed (whose filtration value is set to one of the values in $[t]$) plus a number of maximal simplices of dimension 1 (whose filtration value is set to 0).  The set of vertices of $\Gamma_i$, ${\rm vert}(\Gamma_i)$,  may be a strict subset of $P'$. Let its cardinality be $\frac{n}{2}-r_i$  and observe that $0\leq r_i<\frac{n}{2}$. Consider now the complete graph on the $\frac{n}{2}+r_i$ vertices of $P\setminus {\rm vert}(\Gamma_i)$.  Any spanning tree of this graph gives $\frac{n}{2}+r_i-1$ edges and we arbitrarily choose $\frac{n}{2}-r_i+1$ edges from the remaining edges of the graph to obtain $n$ distinct edges spanning over the  vertices of $P\setminus {\rm vert}(\Gamma_i)$.  We have thus constructed a 1--dimensional simplicial complex $K_i$ on the $\frac{n}{2}+r_i$ vertices of $P\setminus {\rm vert}(\Gamma_i)$ with exactly $n$ maximal simplices. Finally, we define the complex $\Lambda_i=\Gamma_i\cup K_i$ that has $P$ as its vertex set, dimension $d$, and  $\kappa$ maximal simplices which are also the critical simplices. The filtration value of any simplex which is not maximal is defined to be the minimum of the filtration values of its cofaces in the complex.  
The set of $\Lambda_i$, $i=1, \cdots ,\beta$, where for each complex we associate $t^h$ different filtrations is the set of simplicial complexes (associated with a filtration) that we were  looking for.

The second statement in the theorem is proved through the following computation:

\allowdisplaybreaks
\begin{align*}
\log\left(\dbinom{\dbinom{n/2}{d+1}}{\kappa-n}t^{\kappa-n}\right)\ge &\ \log{\left(\frac{n^{(d+1)(\kappa-n)}}{2^{(d+1)(\kappa-n)}(d+1)^{(d+1)(\kappa-n)}(\kappa-n)^{(\kappa-n)}}\right)}+(\kappa-n)\log t \\
= &\ (d+1)(\kappa-n)\log n -(d+1)(\kappa-n)- (d+1)(\kappa-n)\log (d+1)\\
 &\ \phantom{{}(d+1)(\kappa-n)\log n -(d+1)} - (\kappa-n)\log (\kappa-n) +(\kappa-n)\log t\\
> &\ (d+1)(\kappa-n)\log n -3(d+1)(\kappa-n)-(d+1)(\kappa-n)\log d \\
 &\ \phantom{{}(d+1)(\kappa-n)\log n -(d+1)\log n } - (\kappa-n)\log \kappa +(\kappa-n)\log t\\
\ge &\ (d+1)(\kappa-n)(\log n -3-\log d) - (\kappa-n)(1-\varepsilon)d\log n \\
&\ \phantom{{}(d+1)(\kappa-n)(\log n -3-\log d) - (\kappa-n)(1-\varepsilon)d}  +(\kappa-n)\log t\\
\ge &\ d\varepsilon(\kappa-n)\log n + (\kappa-n)\log n -(d+1)(\kappa-n)\left(3+\frac{\varepsilon}{3}\log n\right)\\
&\ \phantom{{}(d+1)(\kappa-n)(\log n -3-\log d) - (\kappa-n)(1-\varepsilon)d}  +(\kappa-n)\log t\\
\ge &\ \frac{2\varepsilon}{3}\left(1-\frac{\varepsilon}{2}\right)\kappa d\log n  +\left(1-\frac{\varepsilon}{2}\right)\kappa\log t+\left(1-\frac{\varepsilon}{2}\right)\kappa\log n \\
&\ \phantom{{}(\kappa-n)\log n -(d+1)}-3d\left(1-\frac{\varepsilon}{2}\right)\kappa -\left(1-\frac{\varepsilon}{2}\right)\kappa\left(3+\frac{\varepsilon}{3}\log n\right) \\
= &\ \Omega(\kappa (d\log n+\log t))
\end{align*}

We note that in the above computation, the first inequality is obtained by applying the following bound on binomial coefficients: $\binom{n}{d}\ge\left(\frac{n}{d}\right)^d$.
\end{proof}

\subsection{Simplex Tree}
  Let $K\in {\cal K}(n,d,k,m)$ be a simplicial complex whose  vertices are  labeled from 1 to $n$ and ordered accordingly. We can thus associate to each simplex of $K$ a word on the alphabet set $[n]$. Specifically, a $j$-simplex of $K$ is uniquely represented as the word of length $j + 1$ consisting of the ordered set of the labels of its $j + 1$ vertices. Formally, let $\sigma = \{v_{\ell_0} , \ldots , v_{\ell_j} \}$ be a simplex, where $v_{\ell_i}$ are vertices of $K$ and $\ell_i \in [n]$ and $\ell_0 <\cdot\cdot\cdot < \ell_j$ . $\sigma$ is represented by the word $[\sigma] = [ \ell_0 , \cdots , \ell_j ]$. The simplicial complex $K$ can be defined as a collection of words on an alphabet of size $n$. To compactly represent the set of simplices of $K$, the corresponding words are stored in a tree and this data structure is called the Simplex Tree of $K$ and denoted by ST$(K)$ or simply ST when there is no ambiguity. It may be seen as a trie on the words representing the simplices of the complex. The depth of the root is 0 and the depth of a node is equal to the dimension of the simplex it represents plus one. 

We give a constructive definition of ST. Starting from an empty tree,
insert the words representing the simplices of the complex in the
following manner. When inserting the word $[\sigma] = [ \ell_0
,\cdot\cdot\cdot, \ell_j ]$ start from the root, and follow the path
containing successively all labels $\ell_0 , \cdot\cdot\cdot ,
\ell_i$, where $[ \ell_0 ,\cdot\cdot\cdot, \ell_i ]$ denotes the
longest prefix of $[\sigma]$ already stored in the ST. Next,
append to the node representing $[ \ell_0 ,\cdot\cdot\cdot, \ell_i ]$
a path consisting of the nodes storing labels $\ell_{i+1}
,\cdot\cdot\cdot, \ell_j$. The filtration value of $\sigma$ denoted by $f(\sigma)$ is stored inside the node containing the label $\ell_j$, in the above path.
In Figure 2, we give ST for the simplicial complex shown in Figure~\ref{fig:SimplicialComplexExample}.

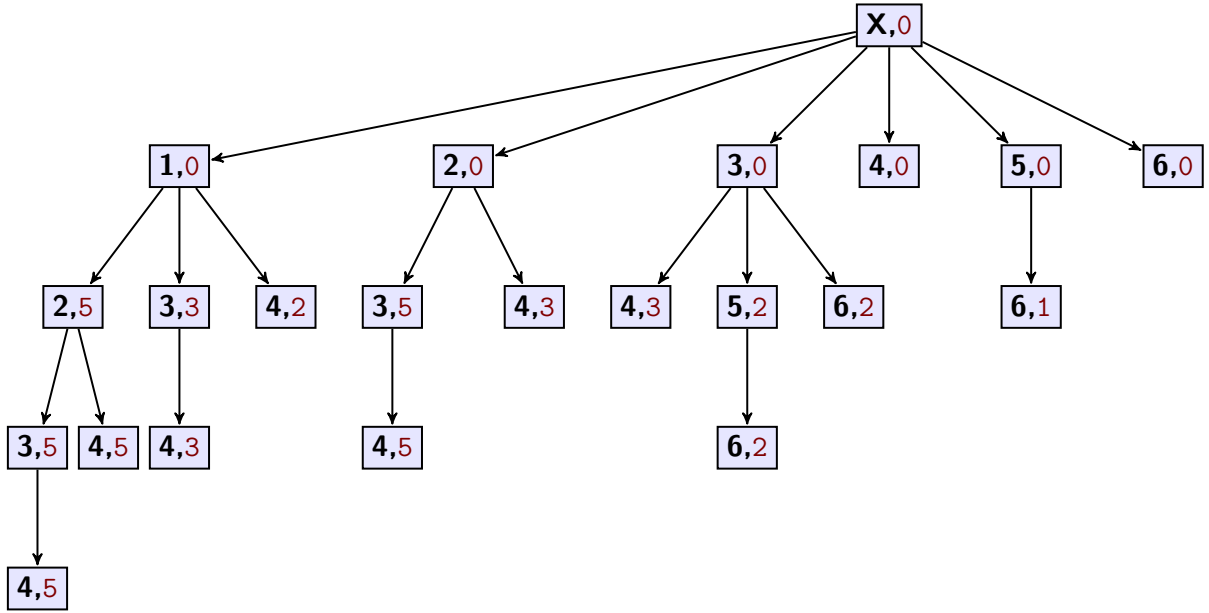
\begin{figure}[!h]
\centering
\resizebox{\textwidth}{!}{
\begin{tikzpicture}[->,>=stealth',shorten >=0.5pt,auto,node distance=4cm,
  thick,main node/.style={rectangle,fill=blue!10,draw,font=\sffamily\large\bfseries}]
  \node[main node] (0) at (0,0) {X,\color{red!50!black}\texttt{0}};
  \node[main node] (1) at (-10,-2) {1,\color{red!50!black}\texttt{0}};
  \node[main node] (2) at (-6,-2) {2,\color{red!50!black}\texttt{0}};
  \node[main node] (3) at (-2,-2) {3,\color{red!50!black}\texttt{0}};
  \node[main node] (4) at (0,-2) {4,\color{red!50!black}\texttt{0}};
  \node[main node] (5) at (2,-2) {5,\color{red!50!black}\texttt{0}};
  \node[main node] (6) at (4,-2) {6,\color{red!50!black}\texttt{0}};
  \node[main node] (12) at (-11.5,-4) {2,\color{red!50!black}\texttt{5}};
  \node[main node] (13) at (-10,-4) {3,\color{red!50!black}\texttt{3}};
  \node[main node] (14) at (-8.5,-4) {4,\color{red!50!black}\texttt{2}};
  \node[main node] (23) at (-7,-4) {3,\color{red!50!black}\texttt{5}};
  \node[main node] (24) at (-5,-4) {4,\color{red!50!black}\texttt{3}};
  \node[main node] (34) at (-3.5,-4) {4,\color{red!50!black}\texttt{3}};
  \node[main node] (35) at (-2,-4) {5,\color{red!50!black}\texttt{2}};
  \node[main node] (36) at (-0.5,-4) {6,\color{red!50!black}\texttt{2}};
  \node[main node] (56) at (2,-4) {6,\color{red!50!black}\texttt{1}};  
  \node[main node] (123) at (-12,-6) {3,\color{red!50!black}\texttt{5}};
  \node[main node] (124) at (-11,-6) {4,\color{red!50!black}\texttt{5}};
  \node[main node] (134) at (-10,-6) {4,\color{red!50!black}\texttt{3}};
  \node[main node] (234) at (-7,-6) {4,\color{red!50!black}\texttt{5}};
  \node[main node] (356) at (-2,-6) {6,\color{red!50!black}\texttt{2}};
  \node[main node] (1234) at (-12,-8) {4,\color{red!50!black}\texttt{5}};
      
  \path[every node/.style={font=\sffamily\small}]
    (0) edge node {} (1)
    (0) edge node {} (2)
    (0) edge node {} (3)
    (0) edge node {} (4)
    (0) edge node {} (5)
    (0) edge node {} (6)
    (1) edge node {} (12)
    (1) edge node {} (13)
    (1) edge node {} (14)
    (2) edge node {} (23)
    (2) edge node {} (24)
    (3) edge node {} (34)
    (3) edge node {} (35)
    (3) edge node {} (36)
    (12) edge node {} (123)
    (12) edge node {} (124)
    (13) edge node {} (134)
    (23) edge node {} (234)
    (35) edge node {} (356)
    (5) edge node {} (56)
    (123) edge node {} (1234)
     ;

\end{tikzpicture}
}
\caption{Simplex Tree of the simplicial complex in Figure~\ref{fig:SimplicialComplexExample}. In each node, the label of the vertex is indicated in black font and the filtration value stored by the simplex is in brown font.}
\label{fig:SimplexTreeExample}
\end{figure}

If $K$ consists of $m$ simplices (including the
empty face) then, the associated ST contains exactly $m$ nodes. Thus, we
need $\Theta(m(\log n+\log t))$ bits to represent the nodes in ST (since each node stores a vertex which needs $\Theta(\log n)$ bits to be represented and also stores the filtration value of the simplex that the node corresponds to, which needs $\Theta(\log t)$ bits to be represented). 
%We can compare the upper bound obtained
%to the lower bound of Theorem~\ref{Lowerbound}. In particular, if
%$k=\mathcal{O}(1)$ then, ST requires at least $\Omega(2^d\log n)$ bits
%where as Theorem~\ref{Lowerbound} proves the necessity of only
%$\Omega(d\log n)$ bits. On the other hand, the lower bound of Theorem~\ref{FiltrationLowerbound} is tight for the Simplex Tree. Finally, if we compare the size of ST with the unified lower bound in Corollary~\ref{UnifiedLowerBound}, it is clear that ST matches the lower bound only when $m=\mathcal{O}(kd)$, which is not a very interesting case\footnote{In general, we have that $k+d\le m\le k\cdot 2^{d+1}$.}. \\
We can compare the upper bound obtained
to the lower bound of Lemma~\ref{FiltrationLowerbound}. In particular, ST matches the lower bound when $t$ is {$n^{\Omega (1)}$}. %not small.% only when $m=\mathcal{O}(kd)$, which is not a very interesting case (as, in general, we have that $k+d\le m\le k\cdot 2^{d+1}$).

Now, we will briefly discuss the cost of doing some basic operations through ST on a simplicial complex. Checking if a simplex $\sigma$ is in the complex, is equivalent to checking the existence of the corresponding path starting from the root in ST. This can be done very efficiently in time $\mathcal{O}(d_\sigma \log n)$ and therefore all operations which primarily depend on the membership query can be efficiently performed using ST. One such example, is querying the filtration value of a simplex. However, due to its explicit representation,  insertion is a costly operation on ST (exponential in the dimension of the simplex to be inserted {since a $d$-simplex has $2^d$ faces}). Similarly, removal is also a costly operation on ST, since there is no efficient way to locate and remove all cofaces of a simplex. Consequently, topology preserving operations such as elementary collapse and edge contraction are also expensive for ST.  
These operation costs are summarized later in Table~\ref{tab:OperationsonMSD}.
In the next section, we will introduce a new data structure which does a better job of balancing between static queries (e.g. membership) and dynamic queries (e.g. insertion and removal)%, while incurring a small blowup in size in the worst case

\subsection{Motivation for a New Data Structure: {The Case of Flag Complexes}}
\label{motivation}
In this subsection, we will motivate the need for a new data structure 
that is more sensitive to the number of critical simplices than the Simplex Tree by taking up the case of flag complexes.  The flag complex of an undirected graph $G$ is defined as an abstract simplicial complex, whose simplices are the sets of vertices in the cliques of $G$.
As noted by Boissonnat et al.\ (see Section~2 in \cite{BKT15}), in several cases of interest, the number $k$ of maximal simplices of a flag complex is small.
%We begin by noting  that the case when
%$k=\mathcal{O}(n)$, was argued to be of particular interest. 
In particular, %they observed that 
we have $k=\mathcal{O}(n)$ for flag complexes constructed from planar graphs and expanders \cite{ELS10} and, more generally, from nowhere dense graphs \cite{GKS13}, and also from chordal graphs\cite{G80}. Furthermore, %they noted that 
$k=n^{\mathcal{O}(1)}$ \cite{GKS13} for all flag complexes constructed from graphs with degeneracy $\mathcal{O}(\log n)$ (degeneracy is the smallest integer $r$ such that every subgraph has a vertex of degree at most $r$).

We further add to this list of observations by noting that the flag complexes of $K_{\ell}$-free graphs have at most $\max\{n,n\Delta^{\ell-2}/2^{\ell-2}\}$ maximal simplices \cite{P95}, where $\Delta$ is the maximum degree of any vertex in the graph. Thus, when $\Delta$ and $\ell$ are constants, we have $k=\mathcal{O}(n)$.  Finally, we note that the flag complexes of Helly circular-arc (respectively, circle) graphs \cite{G74,D03}, and boxicity-2 graphs \cite{S03} have $k=n^{\mathcal{O}(1)}$ from Corollary~4 of \cite{RS07}.
This encompasses a large class of complexes
encountered in practice and if the number of maximal simplices is small, then any data structure that is sensitive to $k$ is likely to be more efficient than \ST. Moreover, we go a step ahead and show below that any data structure that is sensitive to $\kappa$, the number of critical simplices, is likely to be more efficient than \ST\ as well. This is done by proving that in most of the above cases, if $k$ is small then so is $\kappa$. We formalize the argument below.

{We recall the definition of a hereditary property of a graph \cite{H73}. A property $P$ of a graph $G$ is hereditary if, whenever $G$ has property $P$ and $G'$ is a subgraph of $G$, then $G'$ also has property $P$.  We consider a specific filtration $f$ for flag complexes which is of significant interest: the filtration value of every edge $e$ in the complex ({denoted $f(e)$})  is given as part of the input and the filtration value of a simplex of higher dimension is equal to the maximum of the filtration values of all the edges in the simplex. We have the following lemma  for the filtration described above.}

\begin{lemma}\label{flagpsi}
Let $t:\mathbb N\to\mathbb N$ be a non-decreasing function. Let $P$ be a hereditary property of a graph such that any graph $G$ on $n$ vertices and $E$ edges with the property $P$ has at most $t(n)$ maximal cliques. Then the number of critical simplices of the flag complex of $G$ is at most $n + t(n) \cdot  E$.
\end{lemma}
\begin{proof}
For every edge $e$ in $G$, let $G_e$ be the subgraph of $G$ induced by the vertices of $e$ and all the vertices which are common neighbors of both the vertices of $e$. Then, we remove all the edges in $G_e$ which have a filtration value higher than $f(e)$. Let $M_e$ be the set of all maximal cliques in $G_e$. Let $M=\underset{e}{\bigcup}\ M_e$. 
Let $\sigma$ be a critical simplex of filtration value $f(e)$. We claim that $\sigma$ is a maximal clique in $G_e$. To see that $\sigma$ is a clique in $G_e$, notice that from the monotonicity property of filtration, we have that the filtration value of the edges in $\sigma$ are at most $f(e)$, and thus all the edges of $\sigma$ are in $G_e$. The observation that  $\sigma$ is a maximal clique in $G_e$  follows from the assumption that $\sigma$ is a critical simplex in the flag complex.

Hence, we have that the number of critical simplices is at most $|M|+n$ (the additive $n$ term comes from the possibility that the vertices might be critical simplices). Thus, we have the following upper bound on the number of critical simplices. 
\begin{align}
|M|&\le E\cdot \max_{e}\ |M_e|\nonumber\\
&\le E\cdot \max_{e}\ t\left(|G_e|\right)\label{eqher1}\\
&\le E\cdot t(n),\label{eqher2}
\end{align}
where \eqref{eqher1} follows from the hereditary property of $G$ and \eqref{eqher2} follows from $t$ being a non-decreasing function.
\end{proof}

We note that planarity, expansion (in some cases) and degeneracy are all hereditary properties of a graph. 
Thus from the above lemma we have that a data structure sensitive to the number of critical simplices representing the  flag complexes of graphs having any of the above properties is not only of small size but may also be efficient to construct (i.e., requires $\text{poly}(n)$ time).
{Our goal is to show that the Critical Simplex Diagram to be introduced in the next section  exhibits these nice properties for a large class of general simplicial complexes.}

\section{Critical Simplex Diagram}\label{CSD}
In this section, we introduce the {\em Critical Simplex Diagram}, $\CSD(K)$, which is an improved version of SAL \cite{BKT15} (as it has the additional functionality of maintaining filtrations). CSD is a  collection of $n$ arrays that correspond to the $n$ vertices of $K$. The elements of an array, referred to as {\em nodes} in the rest of the paper, correspond to copies of the vertex of $K$ associated to the array.  The collection of all these nodes across all the arrays will represent the disjoint union of the critical simplices in the complex.
Additionally, CSD has  edges that join nodes of different arrays. Each connected component of edges in CSD represents a (critical) simplex of $K$.   We describe some notations used throughout the section below. 

\subsection{Notations}\label{sec:notations}

Through out the section, we will denote by  $\sigma$  a simplex of dimension $d_{\sigma}$ in the complex and  denote its vertices by  $[v_{\ell_0} \cdots v_{\ell_{d_\sigma}}]$.
Let $S_h$ be the set of simplices in the complex whose filtration value is $h$. Let $M_h$ be the subset of $S_h$ containing \emph{all} the critical simplices of the complex in $S_h$. For instance, in the complex of Figure~\ref{fig:SimplicialComplexExample}, we have $M_0=\{1,2,3,4,5,6\}$, $M_1=\{[56]\}$, $M_2=\{[14],[356]\}$, $M_3=\{[134], [24]\}$, $M_4=\emptyset$, and $M_5=\{[1234]\}$. Moreover, we note that $M_0, M_1,...., M_t$ are all disjoint and denote by $M$ the union of all $M_h$. Hence $M$ is the set of all critical simplices of the complex.

We denote by $\Psi_{\max}(i)$ the subset of simplices of $M$ that contain  vertex $i$. In other words,
$$\Psi_{\max}(i)={\{\sigma\in K\big| i\in \sigma, f(\sigma)<f(\tau), \forall\tau\in K, \;\; \sigma \subset \tau\}.}	
$$
Let $\Psi=\underset{i\in [n] }{\max}\ |\Psi_{\max}(i)|$.

{We denote by  $\Gamma_j$ the  largest number of maximal simplices of $K$ that contains any $j$-simplex of $K$.} We have the following bounds: $$k\ge \Gamma_0\ge\Gamma_1\ge \cdots \ge \Gamma_d = 1,$$ $$\Gamma_0\le \Psi \le m.$$
Moreover, when $t=0$, we have $\Psi=\Gamma_0$. 
We describe the construction of CSD below.

\subsection{Construction}\label{sec:construction}
We initially have $n$ empty arrays $A_1,\ldots ,A_n$. 
The vertices of $K$ are associated to the arrays. Each array contains a set of nodes that are copies of the vertex associated to the array and are labelled by an ordered pair of integers (to be defined below).
Nodes belonging to distinct arrays are joined by edges leading to a graph structure. Each connected component of that graph is a star tree\footnote{A star tree is a tree in which there is a special vertex with an edge to every other vertex in the graph.} and there is a bijection between the star trees of that graph and the set of critical simplices {$M$}. All the nodes of such a simplex are
labelled by the same ordered pair of integers.  The first integer refers to the filtration value of the simplex and the second integer refers to a number used to index critical simplices that have the same filtration value. 
For instance, in Figure~\ref{fig:CSDExample} we have the CSD representation of the simplicial complex of Figure~\ref{fig:SimplicialComplexExample}, and the triangle $[134]$ in $M_3$ is represented by 3 nodes, each with label $(3,1)$, that are  connected by edges. %Further, in order to facilitate fast computing of filtration value of cofaces (of codimension 1) of a simplex, we maintain additional pointers using dashed edges. For instance, the cofaces of codimension 1 of edge $1-4$ are $1-2-4$ and $1-3-4$. $1-3-4$ is in $M_3$ therefore we have a dashed edge from a node with label $(2,2)$ (in connected component corresponding to simplex $1-4$) to the node with label $(3,1)$ (in connected component corresponding to simplex $1-3-4$). $1-2-4$ is in $S_5$ and therefore we have a dashed edge from a node with label $(2,2)$ to the node with label $(5,1)$ (in connected component corresponding to simplex $1-2-3-4$). 
Below we provide a more detailed treatment of the construction of CSD.

 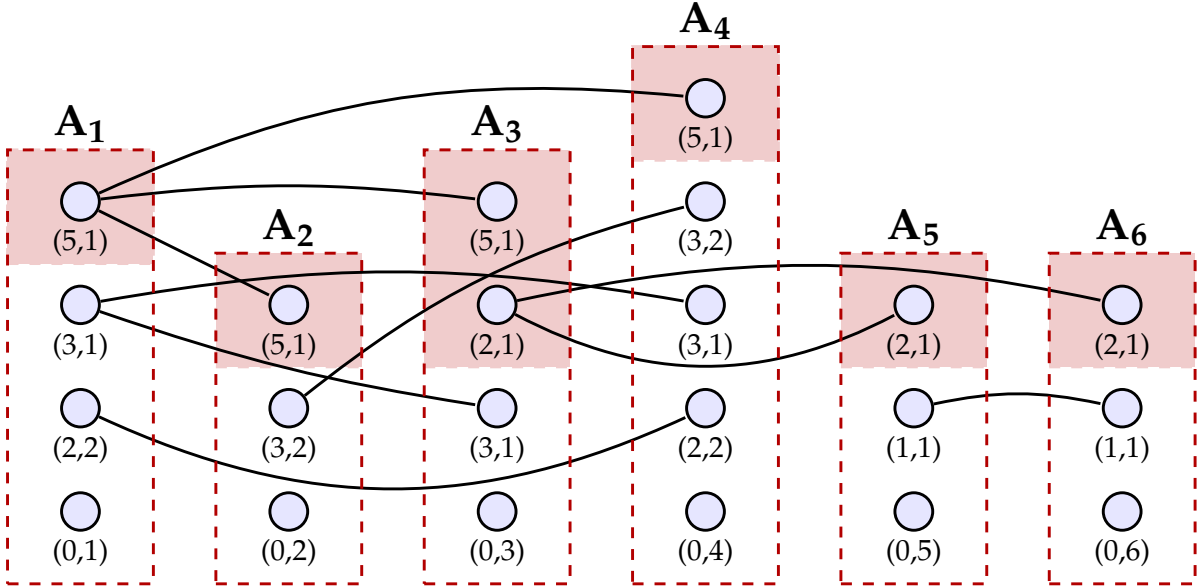
\begin{figure}[!h]\centering
\resizebox{\textwidth}{!}{
 
\begin{tikzpicture}[->,>=stealth',shorten >=0.5pt,auto,node distance=2cm,
  thick,main node/.style={circle,fill=blue!10,draw,font=\sffamily\large\bfseries}]

     \filldraw[red!70!black!20!white,dashed] (3.7,2.5) rectangle (2.3,1.4);
     \filldraw[red!70!black!20!white,dashed] (1.7,3.5) rectangle (0.3,2.4);     
     \filldraw[red!70!black!20!white,dashed] (4.3,1.4) rectangle (5.7,3.5);
     \filldraw[red!70!black!20!white,dashed] (6.3,3.4) rectangle (7.7,4.5);     
     \filldraw[red!70!black!20!white,dashed] (8.3,1.4) rectangle (9.7,2.5);
     \filldraw[red!70!black!20!white,dashed] (10.3,1.4) rectangle (11.7,2.5);     

  \node[main node] (122) at (1,1) {};
  \node[main node] (1) at (1,0) {};
  \node[main node] (2) at (3,0) {};
  \node[main node] (3) at (5,0) {};    
  \node[main node] (4) at (7,0) {};
  \node[main node] (5) at (9,0) {};
  \node[main node] (6) at (11,0) {};      
  \node[main node] (131) at (1,2) {};  
  \node[main node] (151) at (1,3) {};
  \node[main node] (232) at (3,1) {};  
  \node[main node] (251) at (3,2) {};
  \node[main node] (331) at (5,1) {};
  \node[main node] (321) at (5,2) {};  
  \node[main node] (351) at (5,3) {};  
  \node[main node] (422) at (7,1) {};
  \node[main node] (431) at (7,2) {};  
  \node[main node] (432) at (7,3) {};
  \node[main node] (451) at (7,4) {};
  \node[main node] (511) at (9,1) {};
  \node[main node] (521) at (9,2) {};  
  \node[main node] (611) at (11,1) {};
  \node[main node] (621) at (11,2) {};

  \path[every node/.style={font=\sffamily}]
    (122) edge  [-,bend right=25
%    ,red!50!blue!60!white
    ] (422)

    (151) edge [-
   % , blue
    ] (251)
    (151) edge [-,bend left=7
   % ,blue
    ] (351)
    (151) edge [-,bend left=15
   % ,blue
    ] (451)

    (321) edge [-,bend right=28
   % ,black!50!green
    ] (521)
    (321) edge [-,bend left=12
   % ,black!50!green
    ] (621)

    (131) edge [-,bend right=5
  %  ,yellow!50!brown!80!black
    ] (331)
    (131) edge [-,bend left=10
   % ,yellow!50!brown!80!black
    ] (431)

    (232) edge [-,bend left=12
  %  ,white!50!black
    ] (432)

    (511) edge [-,bend left=12
  %  ,red!50!yellow
    ] (611)

   % (122) edge [dashed,bend left=70] (131)
   % (131) edge [dashed,bend left=70] (151)    
   % (232) edge [dashed,bend left=70] (251)        
   % (511) edge [dashed,bend left=70] (521)    
   % (5) edge [dashed,bend left=70] (511)    
   % (6) edge [dashed,bend left=70] (611)               
   % (6) edge [dashed,bend right=54] (621)            
   % (5) edge [dashed,bend right=54] (521)                
   % (1) edge [dashed,bend left=70] (122)               
   % (1) edge [dashed,bend right=54] (131)            
   % (1) edge [dashed,bend left=40] (151)               
	%(2) edge [dashed,bend left=70] (232)               
    %(2) edge [dashed,bend right=54] (251)    
  	%(3) edge [dashed,bend left=70] (331)               
    %(3) edge [dashed,bend right=54] (321)            
    %(3) edge [dashed,bend left=40] (351)               
  	%(4) edge [dashed,bend right=80] (422)               
    %(4) edge [dashed,bend right=54] (431)            
    %(4) edge [dashed,bend left=40] (432)               
	
%    (122) edge [dashed,bend right=50] (151)            

    ;
     
	\scriptsize	

     \draw[red!70!black,dashed] (3.7,2.5) rectangle (2.3,-0.7);
     \draw[red!70!black,dashed] (1.7,3.5) rectangle (0.3,-0.7);     
     \draw[red!70!black,dashed] (4.3,-0.7) rectangle (5.7,3.5);
     \draw[red!70!black,dashed] (6.3,-0.7) rectangle (7.7,4.5);     
     \draw[red!70!black,dashed] (8.3,-0.7) rectangle (9.7,2.5);
     \draw[red!70!black,dashed] (10.3,-0.7) rectangle (11.7,2.5);     
	 \node at (1,-0.4) {(0,1)};
	 \node at (3,-0.4) {(0,2)};
	 \node at (5,-0.4) {(0,3)};
	 \node at (7,-0.4) {(0,4)};
	 \node at (9,-0.4) {(0,5)};
	 \node at (11,-0.4) {(0,6)};

	 \node at (1,0.6) {(2,2)};
     \node at (1,1.6) {(3,1)};          
     \node at (1,2.6) {(5,1)};
	 \node at (3,0.6) {(3,2)};
     \node at (3,1.6) {(5,1)};          
     \node at (5,0.6) {(3,1)};          
	 \node at (5,1.6) {(2,1)};
     \node at (5,2.6) {(5,1)};
     \node at (7,0.6) {(2,2)};
     \node at (7,1.6) {(3,1)};
     \node at (7,2.6) {(3,2)};
     \node at (7,3.6) {(5,1)};     
     \node at (9,0.6) {(1,1)};
     \node at (9,1.6) {(2,1)};     
     \node at (11,0.6) {(1,1)};
     \node at (11,1.6) {(2,1)};

	\large{\node at (1, 3.75) {$\mathbf{A_1}$};     }
	\large{\node at (3, 2.75) {$\mathbf{A_2}$};     }
	\large{\node at (5, 3.75) {$\mathbf{A_3}$};     }
	\large{\node at (7, 4.75) {$\mathbf{A_4}$};     }
	\large{\node at (9, 2.75) {$\mathbf{A_5}$};     }
	\large{\node at (11, 2.75) {$\mathbf{A_6}$};     }		
				
\end{tikzpicture}
	}
\caption{Critical Simplex Diagram for the complex in Figure~\ref{fig:SimplicialComplexExample}. The shaded region in each $A_i$ corresponds to $A_i^\star$. { A star tree represents a critical simplex. Each node has two labels, the first one is the filtration value and the second one designates a unique identification to the simplex within the set of critical simplices with the same filtration value.}}

  \label{fig:CSDExample}

\end{figure}

Given $M_h$ for every $h\in\llbracket t\rrbracket$, we build the CSD by inserting the simplices in $M_h$ in decreasing values of $h$. For {every} simplex $\sigma=v_{\ell_0} \cdot\cdot\cdot v_{\ell_{d_{\sigma}}}$ in $M_h$, we associate a unique key ${\cal H}(\sigma) \in \left[|M_h|\right]$. We insert a node with label $(h,\mathcal{H}(\sigma))$ into each of the arrays $A_{\ell_0},\ldots ,A_{\ell_{d_{\sigma}}}$. 
For every $j\in [d_{\sigma}]$, we introduce an edge between $(h,\mathcal{H}(\sigma))$ in $A_{\ell_{0}}$ and $(h,\mathcal{H}(\sigma))$ in $A_{\ell_{j}}$. In other words, a critical simplex  $\sigma$ is represented in CSD by a star tree in the graph thus defined. Each star tree is on $d_{\sigma}+1$ nodes where $v_{\ell_0}$ is the center of the star tree. Furthermore, each node in CSD corresponds to a vertex in exactly one simplex.

We denote by $A_i^\star$ the set of all nodes in $A_i$ which are part of the star trees that correspond to maximal simplices in $K$ (the region with these nodes are shaded in Figure~\ref{fig:CSDExample}). Inside each $A_i$, we first sort nodes based on whether they are in $A_i^\star$ or not. Furthermore, inside $A_i^\star$ and inside $A_i\setminus A_i^\star$, we sort the nodes according to the lexicographic order of their labels.  % based on the first coordinate and in case of ties, we sort them  based on the second coordinate. 
Thus we have that $A_i^\star$ is a contiguous subarray of $A_i$, i.e., all consecutive elements in $A_i^\star$ are also consecutive elements in $A_i$, as 
can be observed in Figure~\ref{fig:CSDExample}. We partition $A_i$ as above because it will help us perform the operations which do not depend on the filtration of the complex (such as membership query) at least as fast as SAL.

Finally, we remark here that we can use a priority queue data structure to efficiently generate keys for the simplices.

\subsection{Size of the Critical Simplex Diagram}\label{sizeofCSD}
We denote the total number of nodes in CSD by $|\CSD|$.
The number of nodes in each $A_i$ is at most $\Psi$,  and thus $|\CSD|$ is at most $\Psi n$.  Note that the number of edges in CSD is also at most $\Psi n$ since CSD is a collection of star trees (i.e., a star forest).

 Alternatively, we can bound the number of nodes by $|M|d$, where $M$ as defined in Section~\ref{sec:notations} is the union of all $M_h$. The actual relation between $\Psi$ and $|M|$ can be stated as follows:
$$\sum_{i=1}^{n} |\Psi_{\max}(i)| = \sum_{\sigma\in M} (d_\sigma+1).$$
{We recall that in the previous  section we denoted $|M|$ by $\kappa$.}

Furthermore, in each node we store a filtration value (which requires $\log t$ bits), a unique key (which requires $\underset{h\in\llbracket t\rrbracket}{\max}\ \log |M_h|\le \log m$ bits), and a pointer to the center vertex of the star tree. If the node corresponds to the center vertex of a star tree then it additionally maintains the list of all its neighbors as well. We can thus upper bound the space needed to store the nodes of CSD by $\kappa d(3\log m+\log t)$.
 Thus, CSD matches the lower bound in Theorem~\ref{CSDLowerBound}, up to a multiplicative factor of $\mathcal O(d)$ (as $m=\mathcal{O}\left(n^d\right)$).

In the case of $\CSD$, we are interested in the value of $\Gamma_0$ and $\Psi$ which we use to estimate the worst-case cost of basic operations in $\CSD$, in the following subsection.

\subsection{Operations on the Critical Simplex Diagram}\label{sec:operations}
Let us now analyze the cost of performing basic operations on $\CSD$. 
First, we describe how to intersect arrays and update arrays in CSD as these are elementary operations on CSD which will be required to perform basic operations on the simplicial complex that it represents. Next, we describe how to perform static queries such as the membership query. Then we describe how to perform dynamic queries such as the insertion or  removal of a simplex. Finally, we compare the efficiency of CSD with ST. We remark here that in order to perform the above operations efficiently, we will exploit the fact that the filtration value of a simplex that is not critical is equal to the minimum of the filtration values of its cofaces. 

\subsubsection{Elementary Operations on the Critical Simplex Diagram}

\subsubsection*{Implementation of the arrays}
We implement the arrays $A_i$ using  red-black trees as the building blocks. This allows us to efficiently search, insert, and remove an element inside $A_i$ in time $\mathcal{O}(\log |A_i|)$.  {We treat 
 $A_i^\star$ and  $A_i\setminus A_i^\star$ separately. 
Below we will discuss how to implement $A_i^\star$ and the same will hold for $A_i\setminus A_i^\star$. Each subarray of $A_i^\star$ with a same filtration value, i.e., the same first coordinate, is implemented using a red-black tree. These subarrays which partition $A_i^\star$ are labelled with the common first coordinate value of their elements. We represent the set of subarrays using a red-black tree on their labels. Therefore, each $A_i^\star$ is represented as the composition of two  
red-black trees.  $A_i\setminus A_i^\star$ is represented similarly.} Finally, to search in $A_i$  we sequentially search in $A_i^\star$ and then search in $A_i\setminus A_i^\star$.

\subsubsection*{Intersecting arrays}

We will need to compute $A_\sigma$, defined as the intersection of $A_{\ell_0},\dots ,A_{\ell_{d_\sigma}}$, and  $A_\sigma^\star$, defined as the intersection of $A_{\ell_0}^\star,\dots ,A_{\ell_{d_\sigma}}^{\star}$. To compute  $A_\sigma$, we first find out the array with fewest elements amongst $A_{\ell_0},\dots ,A_{\ell_{d_\sigma}}$. Then, for each element $x$  in that array, we search for $x$ in the other $d_\sigma$ arrays, which can be done in time 
$\mathcal{O}\left(d_\sigma\log \left(\underset{i}{\max}\ |A_{\ell_i}|\right)\right)$. Hence $A_\sigma$ can be computed in time 
$$\mathcal{O}\left(\left(\underset{i}{\min}\ |A_{\ell_i}|\right)d_\sigma\log \left(\underset{i}{\max}\ |A_{\ell_i}|\right)\right) = O(d_{\sigma} \Psi\log\Psi).$$
We can compute 
$A_\sigma^\star$ in the same way as  $A_\sigma$ in time $$\mathcal{O}\left(\left(\underset{i}{\min}\ |A_{\ell_i}^\star|\right)d_\sigma\log \left(\underset{i}{\max}\ |A_{\ell_i}^\star|\right)\right)
 = O(d_{\sigma}\Gamma_0\log {\Gamma_0}).$$

\subsubsection{Static Operations on the Critical Simplex Diagram}\label{staticoperations}

The tree structure of ST provides an efficient representation to perform static operations. However, we show below that we are able to answer static queries using CSD by only paying a multiplicative factor of $\Psi$ (in the worst case) over the cost of performing the same operation in ST. In the case of the membership query, the multiplicative factor is reduced to $\Gamma_0$.

\subsubsection*{Membership of a Simplex} 
We first observe that  $\sigma\in K$ if and only if $A_\sigma^\star\neq\emptyset$. This is because if $\sigma\in K$, then there exists a maximal simplex in $K$ that contains $\sigma$. The star tree associated to this maximal simplex  has nodes in all the $A_{\ell_i}^\star$, and all those nodes have the same label. This implies that $A_\sigma^\star\neq\emptyset$, and the converse is also true.
%if $\tau\in A_\sigma^\star$ then, $\sigma$ is a face of $\tau$.
It follows that  determining if $\sigma$ is in $K$ reduces to computing $A_\sigma^\star$ and checking whether it is non-empty. This procedure is very similar to the analogous procedure using SAL \cite{BKT15}. Therefore, membership of a simplex can be determined in time $\mathcal{O}(d_\sigma\Gamma_0\log {\Gamma_0})$. Finally, we note that the membership query allows us to decide if a simplex is maximal in the complex since {a simplex is maximal iff $|A_{\sigma}^{*}|=1$ and the star tree corresponding to the label in $A_{\sigma}^{*}$ has exactly $d_\sigma+1$ nodes. }  We denote this new query by \texttt{is\_maximal}. %, and will be used later for performing other operations. 

\subsubsection*{Access Filtration Value} 
Given a simplex  $\sigma$  of $K$ we want to access its filtration value $f(\sigma)$.  
We observe that $f(\sigma)$ is the minimal filtration value of the nodes in $A_\sigma$ since the  filtration function is monotone w.r.t.\ inclusion. Hence, accessing the filtration value of $\sigma$ reduces to computing $A_\sigma$. Therefore, the filtration value of a simplex can be determined in time $\mathcal{O}(d_\sigma\Psi\log \Psi)$.

For example, consider the $\CSD$ in Figure~\ref{fig:CSDExample}. We have to find the filtration value of $\sigma=[134]$ in the complex of Figure~\ref{fig:SimplicialComplexExample}.  We see that $A_1\cap A_3 \cap A_4=\{(3,1),(5,1)\}$. This means that the filtration value of the triangle is $f([134])=\min\ (3,5)=3$. Finally, we note that the filtration query allows to decide if a simplex is critical in the complex since  {a simplex is critical iff $|A_{\sigma}|=1$ and the star tree corresponding to the label in $A_{\sigma}$ has exactly $d_\sigma+1$ nodes. } This new query is denoted by \texttt{is\_critical}.  %, will be used later for performing other operations.  

\subsubsection*{Computing Filtration Value of Facets} {Given a simplex $\sigma= [v_{l_0}\cdots v_{l_{d_{\sigma}}}]$, we can provide the filtration values of all of its $d_\sigma+1$ facets by calling the previous operation for each of its facets, which will require a total running time of $\mathcal{O}(d_\sigma^2 \Psi\log \Psi)$. We can improve on this naive approach
and obtain the filtration values of the $d_\sigma+1$ facets of $\sigma$ in running time of $\mathcal{O}(d_\sigma \Psi\log \Psi)$. At a high level, we find an efficient way to compute the filtration values of all but one facet in time essentially equal to the cost of performing one access filtration value query. We compute the filtration value of the remaining facet using the access filtration value query.

For $j\in \llbracket d_\sigma\rrbracket$, we denote by $\sigma_{\ell_j}$ the facet of $\sigma$ opposite to $v_{\ell_j}$, i.e. $\sigma_{\ell_j}=v_{\ell_0},\ldots ,v_{\ell_{j-1}},v_{\ell_{j+1}},\ldots ,v_{\ell_{d_\sigma}}$.
Let $r\in\llbracket d_\sigma\rrbracket$ be such that $r=  \arg\min_i    %\underset{i}{\text{argmin}}
 \lvert A_{\ell_i}\rvert$. 
Let $F$ be the set of facets of $\sigma$ that are incident to $v_{\ell_r}$ and $B$  the set of critical simplices that contain a facet of $F$ but do not contain $\sigma$.
Equivalently, $B$ is the subset of $A_{\ell_r}$ containing every element of $A_{\ell_r}$ which appears in exactly $d_{\sigma}-1$ of the sets in $A_{\ell_0},\ldots ,A_{\ell_{r-1}},A_{\ell_{r+1}},\ldots ,A_{\ell_{d_\sigma}}$.

%, the intersection of $A_{\ell_0},\ldots ,A_{\ell_{r-1}},A_{\ell_{r+1}},\ldots ,A_{\ell_{d_\sigma}}$. Observe that where $\sigma_{\ell_r}$. 
Hence  the number of elements of $B$ is $O(\Psi)$ and $B$ can be computed in time $\mathcal{O}(|A_{\ell_r}|d_\sigma\log\Psi ) = \mathcal{O}(\Psi d_{\sigma}\log\Psi )$. 

We define a map $g$ that  maps every simplex in $B$ to the facet of $\sigma\in F$ that it contains. 
%Let $g$ map each node in $B$ to the index of the set $A_{\ell_i}$ which does not contain a node with the same label. In other words, $g$ maps every simplex in $B$ to the facet of $\sigma$ that it contains. 
We sort $B$ based on $g$, and in case of ties based on the first coordinate. For every $j\in \llbracket d_\sigma\rrbracket$, $j\neq r$, let $\alpha_j$ be the minimal filtration value of the nodes in $B$ which are mapped to $j$ under $g$. Then, the filtration value of the facet $\sigma_{\ell_j}$ %=v_{\ell_0} \cdots v_{\ell_{j-1}}\cdot v_{\ell_{j+1}}\cdots v_{\ell_{d_\sigma}}$ 
is the minimum between $\alpha_j$ and $f(\sigma)$. We can compute $f(\sigma)$ using the function Access Filtration Value described previously. If there are no nodes in $B$ mapped to $j$ under $g$ then the filtration value of $\sigma_{\ell_j}$ is $f(\sigma)$.  From the above discussion, we see that computing the filtration values of the $\sigma_{\ell_j}$ for all $j\neq r$, i.e.  for the $d_\sigma$ facets of $\sigma$ incident to vertex $v_{\ell_r}$, requires a total time of $\mathcal{O}(|B|\log |B|+d_\sigma\log |B|)=\mathcal{O}((\Psi+d_\sigma)\log \Psi)$. 
It remains to compute the filtration value of $\sigma_{\ell_r}$ which can be done using the function Access Filtration Value.
% Summarizing, we divided the set of critical simplices containing $\sigma$ into the set of critical simplices containing $\sigma$ and not containing $\sigma$. The filtration value of $\sigma_{\ell_r}$ is the minimum of the filtration values of the critical simplices in the two divisions. The minimum of the filtration values of the critical simplices containing $\sigma$ is $f(\sigma)$. And in this paragraph we have described how to obtain the minimum of the filtration values of the critical simplices not containing $\sigma$.
We conclude that the total running time to compute the filtration values of all the facets of $\sigma$  is $\mathcal{O}(d_\sigma \Psi\log \Psi)$.}

\subsubsection*{Computing Filtration Value of Cofaces of codimension 1} 

Given a simplex $\sigma$, we perform the access filtration value query to obtain the list of all the critical simplices that contain $\sigma$ (there are at most $\Psi$ such simplices).
For each critical simplex $\tau$ that contains $\sigma$, we list all its faces which are cofaces of $\sigma$ of codimension 1 (there are at most $d-d_{\sigma}$ of them), and associate to each of them a filtration value of $f(\tau)$. 
This can be done in time $\mathcal O(\Psi d\log \Psi)$ by traversing through the star tree corresponding to $\tau$.
It might happen that a coface of $\sigma$ of codimension 1 can be associated to more than one filtration value (because this coface is contained in more than one critical simplex). In such a case, the filtration value of the coface is associated to the minimum of the filtration values of the critical simplices that contain it. Again, this can be done in $\mathcal O(\Psi d\log \Psi)$ time. Therefore, the filtration value of all the cofaces of a simplex of codimension 1 can be computed in $\mathcal O(\Psi d\log \Psi)$ time.

\subsubsection{Dynamic Operations on the Critical Simplex Diagram}
\label{sec:dynamicoperations}
Now, we will see how to perform dynamic operations on CSD. We note here that CSD is more suited to perform dynamic queries compared to ST because of its non-explicit representation, and this means that the amount of information to be modified in CSD is always less than in ST. 

\subsubsection*{Lazy Insertion}  {When inserting a critical simplex $\sigma$, we need to check if some of its faces 
were critical (prior to the insertion of $\sigma$)  and have the same filtration value as $f(\sigma)$. We need to remove such simplices since they are no longer maximal for this value of the filtration after the insertion of $\sigma$. The full operation will be described in the next paragraph.  We first describe a lazy version that do not remove (nor even consider) the subfaces of $\sigma$.  We further assume that we know whether $\sigma$ is maximal or not. Lazy insertions will be extensively used in the later sections. % for preliminary construction of simplicial complexes. 
A lazy insertion in CSD requires $\mathcal{O}(d_\sigma\log\Psi)$ time (i.e., the cost of updating the arrays). We remark here that lazy insertions will not hinder any subsequent operation on CSD nor modify their time complexity, and, in order to save memory space,  we can  clean-up the data structure.  %in parallel where such nodes are removed from CSD without affecting any other operation (this discussion is 
See details in the paragraph called `Robustness in Modification' in Section~\ref{Performance}.}

\subsubsection*{Insertion}  Let, as before, $\sigma$ be a simplex that we want to insert in CSD. {Since CSD only stores critical simplices, we can assume that the filtration value $f(\sigma)$ of $\sigma$ is less than the filtration values of all its cofaces.} The insertion operation consists of first checking if $\sigma$ is a maximal simplex in $K$ using \texttt{is\_maximal}. % If $\sigma$ is maximal, then we have to insert the simplex and remove or reallocate (based on their filtration values) all the simplices which were maximal simplices in $K$ but are now faces of $\sigma$. If $\sigma$ is not maximal, then we just  lazy insert $\sigma$ into the complex. 
 
\begin{sloppypar}{\em If $\sigma$ is a maximal simplex}, we insert the star tree corresponding to $\sigma$ in $A_{\ell_0}^\star,\ldots ,A_{\ell_{d_\sigma}}^\star$. Updating the arrays $A_{\ell_i}$ takes time $\mathcal{O}(d_\sigma\log \Psi)$. Next, we have to check if there existed maximal simplices in $K$ which are now faces of $\sigma$ (and therefore no longer maximal). We  either remove such a simplex if its filtration value is equal to $f(\sigma)$ or move it from $A_{\ell_i}^\star$ to $A_{\ell_i} \setminus A_{\ell_i}^\star $ if its filtration value is strictly less than that of $f(\sigma)$. We restrict our search to the faces of $\sigma$ which were previously {maximal}, by considering for every vertex $v$ in $\sigma$, the set of all maximal simplices that contain $v$, denoted by $Z_v$. We can compute $Z_v$ in time $\mathcal{O}( d_{\sigma}\Gamma_0 \log \Gamma_0)$. Then, we compute ${\cup}_{v\in\sigma} Z_v$ whose size is at most $(d_\sigma+1)\Gamma_0$ and check if any of these maximal simplices is a face of $\sigma$ (which can be done in $\mathcal{O}(d_\sigma^2\Gamma_0)$ time). If such a face of $\sigma$ in ${\cup}_{v\in\sigma} Z_v$ has filtration value equal to $f(\sigma)$ then, we remove its associated star tree. To remove all such star trees takes time $\mathcal{O}(d_\sigma^2\Gamma_0\log \Gamma_0)$. On the other hand, if the filtration value of the face is less than $f(\sigma)$ then, we will have to move the node from $A_{\ell_i}^\star$ to $A_{\ell_i} \setminus A_{\ell_i}^\star $ and to place it appropriately to maintain the sorted structure of $A_{\ell_i}$.  To reallocate all such star trees takes time $\mathcal{O}(d_\sigma^2\Gamma_0\log \Psi)$. Summarizing, to  remove or reallocate the faces of $\sigma$ which were previously maximal takes time at most $\mathcal{O}(d_\sigma^2\Gamma_0(\log \Gamma_0+\log \Psi))$. Next, we have to remove all the faces of $\sigma$ which were critical and whose filtration values were equal to $f(\sigma)$. This can be done in time $\mathcal{O}(\Psi d_\sigma^2\log \Psi)$ by simply identifying the star trees all of whose nodes are contained in $A_{\ell_0}\cup\cdots \cup A_{\ell_{d_\sigma}}$.  \end{sloppypar}

{\em If $\sigma$ is not a maximal simplex} then, we insert the star tree corresponding to $\sigma$ in $A_{\ell_0},\ldots ,A_{\ell_{d_\sigma}}$. Updating the arrays $A_{\ell_i}$ takes time $\mathcal{O}(d_\sigma\log \Psi)$. Next, we have to remove all the faces of $\sigma$ which were critical and whose filtration value were equal to $f(\sigma)$. This can be done in time $\mathcal{O}(\Psi d_\sigma^2\log \Psi)$ as described in the previous case. Thefore, the total running time in this case is $\mathcal{O}(\Psi d_\sigma^2 \log \Psi)$. 

\sloppypar{We conclude that the total time for inserting a simplex $\sigma$ is $\mathcal{O}(d_\sigma^2\Gamma_0 (\log \Gamma_0 +\log \Psi)+\Psi d_\sigma^2 \log \Psi)=\mathcal{O}(\Psi d_\sigma^2\log \Psi)$.}

\subsubsection*{Removal} 
To remove a face $\sigma$, we first perform an access filtration value query of $\sigma$ (requires $\mathcal{O}(\Psi d_{\sigma}\log \Psi)$ time). Note that there are at most $\min_{i\in\llbracket d_\sigma\rrbracket}|A_{\ell_i}|$ many critical simplices that contain $\sigma$. We deal with the simplices in  $ A_\sigma^\star$ and $ A_\sigma\setminus A_\sigma^\star$ separately. 
 For every simplex $ \tau\in A_\sigma^\star$, i.e., for every coface $\tau$ of $\sigma$ in $K$ which is a maximal simplex, we remove its corresponding star tree from the CSD. Since there are at most $\Gamma_{d_{\sigma}}$ maximal simplices that contain $\sigma$, the above removal of star trees can be done in $\mathcal{O}(\Gamma_{d_{\sigma}}d\log \Psi)$ time. Next, for each maximal simplex $\tau$ (containing $\sigma$) that we removed, and for every $i\in \llbracket d_\sigma\rrbracket $ we check if the facet of $\tau$ obtained by removing  $v_{\ell_{i}}$ from $\tau$, is a maximal simplex. If yes, we lazy insert the facet as a maximal simplex with the same filtration value as $\tau$. If no,  we lazy insert the facet  as a non-maximal simplex with the same filtration value as $\tau$ provided that it is still a critical simplex (can be checked using the \texttt{is\_critical} query). Note that in order to check if the above mentioned $d_\sigma+1$ facets of $\tau$ are maximal/critical, we do not have to make $d_{\sigma}+1$ \texttt{is\_maximal}/\texttt{is\_critical} queries, but can do the same checking in $\mathcal{O}(\Gamma_{d_\sigma}d\log \Psi )$/$\mathcal{O}(\Psi d\log \Psi )$ time by using the same idea that is described in the `computing filtration value of facets' paragraph in Section~\ref{staticoperations}.

Next, for every simplex $ \tau\in A_\sigma\setminus A_\sigma^\star$  i.e., for every coface $\tau$ of $\sigma$ in $K$ which is a critical (not maximal) simplex, we replace its corresponding star tree by star trees of its $d_{\sigma}+1$ facets with the same filtration value, where the $i^{\text{th}}$ facet is obtained by removing  $v_{\ell_{i-1}}$ from $\tau$. Introducing a star tree and updating the arrays $A_{\ell_i}$ takes time $\mathcal{O}(d_\tau\log \Psi)$. Again, we note that we introduce the star tree of the facet of $\tau$ if and only if the facet is still a critical simplex in the complex (can be checked using the \texttt{is\_critical} query).
Furthermore, if $\sigma$ is a critical simplex (can be checked by \texttt{is\_critical} query) then, we know that there is a star tree representing $\sigma$. We replace this star tree by the star tree for each of its facets which have the same filtration value (if the facets are critical). 

Therefore, the total time for removal is $\mathcal{O}(\Psi^2 d d_\sigma\log \Psi)$.

\subsubsection*{Elementary Collapse} A simplex $\tau$ is collapsible through one of its facets $\sigma$,
if $\tau$ is the only coface of $\sigma$. Such
a pair $(\sigma,\tau)$ is called a free pair, and removing both faces of a free pair is an elementary
collapse. Given a pair of simplices $(\sigma,\tau)$, to check if it is a free pair is done by obtaining the list of all maximal simplices that contain $\sigma$, through the membership query (costs $\mathcal{O}(d_\sigma \Gamma_0\log\Gamma_0 )$ time) and then checking if $\tau$ is the only member in that list that is different from $\sigma$. If yes, then we remove $\tau$ from the CSD by just removing all the nodes in the corresponding arrays in time $\mathcal{O}(d_{\tau}\log \Psi)$. We call such an operation (of removal) as lazy removal. Next, for every facet $\sigma^\prime$ of $\tau$ other than $\sigma$, we check if $\sigma^\prime$ is a critical simplex (post the removal of $\tau$) by applying \texttt{is\_critical}. If yes, we lazy insert $\sigma^\prime$ in time $\mathcal{O}(d_\sigma\log \Psi)$. Finally, if $\sigma$ is explicitly represented in CSD (i.e., it was a critical simplex prior to the removal of $\tau$) then we lazy remove it and, for every facet of $\sigma$,  we similarly check if it is a critical simplex (post the removal of $\sigma$) by applying \texttt{is\_critical}. If yes, we lazy insert that facet in time $\mathcal{O}(d_\sigma\log \Psi)$.
Thus, the total running time is $\mathcal{O}(d_\sigma (d_\sigma\Psi\log \Psi+\Gamma_0\log\Gamma_0))=\mathcal{O}(d_\sigma^2 \Psi\log \Psi)$.

\subsubsection{Summary}

We summarize in Table~\ref{tab:OperationsonMSD} the asymptotic cost of basic operations discussed above and compare it with ST, through which the efficiency of $\CSD$ is established.

\begin{table*}[!htbp]
\setlength\extrarowheight{3pt}
\begin{center}
\resizebox{\linewidth}{!}{
\begin{tabular}{"M{9.6cm}|M{2.45cm}|M{2.5cm}|"}\thickhline
&\centering \large ST& \centering \large CSD  \tabularnewline \thickhline
\end{tabular}}

\vspace{0.1cm}

\resizebox{\linewidth}{!}{
\begin{tabular}{"p{9.6cm}|p{2.45cm}|p{2.5cm}|"}\thickhline
\ Storage&$\Theta(m\log (nt))$&$\mathcal{O}(\Psi n\log  (\Psi t))$\\ \thickhline
\end{tabular}}

\vspace{0.1cm}

\resizebox{\linewidth}{!}{
\begin{tabular}{"p{9.6cm}|p{2.45cm}|p{2.5cm}|"}\thickhline
\ Membership of a simplex $\sigma$&$\Theta(d_{\sigma}\log n)$ & $\mathcal{O}( d_{\sigma}\Gamma_0  \log {\Psi})$ \\\hline
\ Access Filtration Value& $\Theta(d_\sigma\log n)$ &$\mathcal{O}( d_{\sigma}\Psi  \log \Psi)$ \\\hline
%Modify Filtration & $\mathcal{O}(2^{d_\sigma}d_\sigma\log n)$ &$\mathcal{O}(d_\sigma^2\log n)$ \\\hline
\ Computing Filtration Value of Facets & $\mathcal{O}(d_\sigma^2\log n)$ &$\mathcal{O}(d_\sigma \Psi\log \Psi)$ \\\hline
\ Computing Filtration Value of Cofaces of codimension 1 & $\Theta(nd_\sigma\log n)$ &$\mathcal{O}(d\Psi\log \Psi)$ \\\thickhline
\end{tabular}}

\vspace{0.1cm}

\resizebox{\linewidth}{!}{
\begin{tabular}{"p{9.6cm}|p{2.45cm}|p{2.5cm}|"}\thickhline
\ Lazy Insertion of a simplex $\sigma$ &\ \ \ \ \ \ \ \ -- & $\mathcal{O}(d_\sigma \log \Psi)$\\\hline
\ Insertion of a simplex $\sigma$ &$\mathcal{O}(2^{d_{\sigma}}d_{\sigma}\log n)$ & $\mathcal{O}(\Psi d_\sigma^2\log \Psi)$\\\hline
\ Removal of a face & $\mathcal{O}(\Gamma_{d_\sigma}2^dd\log n)$ &$\mathcal{O}(\Psi^2 dd_\sigma\log \Psi)$\\\hline
\ Elementary Collapse & $\Theta(nd_\sigma\log n)$ &$\mathcal{O}(d_\sigma^2\Psi\log \Psi)$\\\thickhline
%Edge Contraction & $\mathcal{O}(k2^d\log n)$ &$\mathcal{O}(\Psi d\log m)$ \\\hline
\end{tabular}}

\end{center}
\caption{Cost of performing basic operations on CSD in comparison with ST.}
\label{tab:OperationsonMSD}
\end{table*}

If the number of critical simplices is not large then $|\CSD|$ is smaller than $|\ST|$. The number of critical simplices is small unless we associate unique filtration values to a significant fraction of the simplices. % For this subsection, we will assume that the number $\Psi$ of critical simplices is small (this assumption will be justified in the next subsection). In this case, we have $\Psi$ to be small and thus the size of CSD is smaller than the size of ST. 
As will be shown later, this assumption is relevant in applications.

We observe that while performing static queries, we pay a factor of $\Psi$ or $\Gamma_0$ in the case of CSD over the cost of the same operation in ST. In the case of dynamic operations we observe that the dependence on the dimension is exponentially smaller in CSD than in ST. Therefore, even if the number of critical simplices is polynomial in the dimension then, there is an exponential gap  between CSD and ST in both the storage and the efficiency of performing dynamic operations.

\subsection{Performance of CSD}\label{Performance}
CSD has been designed to store filtrations of simplicial complexes but it can be used to store simplicial complexes without a filtration. In this case,  $|M|=k$ and  CSD requires $\mathcal{O}(kd\log k)$ memory space, which matches the lower bound in Theorem~\ref{Lowerbound}, when $k=n^{\mathcal{O}(1)}$. In this case, CSD is very similar to SAL~\cite{BKT15}. Marc Glisse and Sivaprasad S.\ \cite{MarcSivaprasad} have performed experiments on SAL   and concluded that it is not only smaller in size but also faster than the Simplex Tree in performing insertion, removal, and edge contraction. 

CSD is also a  compact data structure to store filtrations, as its size matches (up to constant factors for small $d$) the lower bound of $\Omega(|M|(d\log n+\log t))$ in Theorem~\ref{CSDLowerBound}. 
Moreover, if $\Psi$ is small,  CSD is  not only a compact data structure since $|\CSD|$ is upper bounded by $\Psi n$, but, as shown in Table~\ref{tab:OperationsonMSD}, CSD is also a very efficient data structure as all basic operations depend polynomially on $d$ (as opposed to ST for which some operations depend exponentially on $d$).

As our analysis shows, we can express the complexity of CSD in terms of 
 a parameter $\Psi$ that reflects some ``local complexity'' of the simplicial complex. In the worst-case, 
 $\Psi=\Omega(m)$ as it can be observed in the complete complex with each simplex having a unique filtration value. However we conjecture that, even if $m$ is not small, $\Psi$ remains small for a large class of simplicial complexes of practical interest.  This conjecture is supported 
%In order to argue that such examples are rarely encountered in practice, we performed 
by the following experiment (and also Lemma~\ref{flagpsi}).

We considered a set of points  obtained by
sampling a Klein bottle in $\mathbb{R}^5$ and constructed its Rips
filtration (see Section~\ref{Flag} for definition) using libraries provided by the GUDHI project \cite{GUDHI}. We computed $\Gamma_0$ and $\Psi$ for  various  values of $t$. The resulting simplicial complex has $n=10,000$ vertices, dimension $d=17$ and has $m=10,508,486$ simplices of which $k=27,286$ are maximal. We record in Table~\ref{tab:Rips}, the values of $|\ST|$ and $|\CSD|$ for the various filtration ranges of the Rips complex constructed above. Figure~\ref{fig:Rips} shows a graphical illustration of the data.

\begin{table*}[!ht]\label{dataset1}
\begin{center}\resizebox{12cm}{!}{
\begin{tabular}{|c|c|c|c|c|c|c|c|c}\hline
No&$t$&$|\ST|$&$|\CSD|$&$\Gamma_0$&$\Gamma_0^{\text{avg}}$&$\Psi$&$\Psi^{\text{avg}}$
\\\hline
0&0&10,508,486&179,521&115&17.9&115&17.9
\\\hline
1&10&10,508,486&490,071&115&17.9&329&49.0
\\\hline
2&25&10,508,486&618,003&115&17.9&429&61.8
\\\hline
3&100&10,508,486&728,245&115&17.9&723&72.8
\\\hline
4&500&10,508,486&765,583&115&17.9&839&76.5
\\\hline
5&2,000&10,508,486&774,496&115&17.9&860&77.4
\\\hline
6&10,000&10,508,486&777,373&115&17.9&865&77.7
\\\hline
7&25,000&10,508,486&778,151&115&17.9&865&77.8
\\\hline
8&100,000&10,508,486&778,319&115&17.9&866&77.8
\\\hline
9&1,000,000&10,508,486&778,343&115&17.9&866&77.8
\\\hline
10&10,000,000&10,508,486&778,343&115&17.9&866&77.8
\\\hline
\end{tabular}}
\end{center}
\caption{Values of $|\CSD|$, $\Gamma_0$, $\Psi$, and $\Psi^{\text{avg}}$ for the simplicial complex generated from the above data set with increasing values of $t$. {Additionally, we provide $\Gamma_0^{\text{avg}}$ which is the expected number of maximal simplices that a vertex contains and $\Psi^{\text{avg}}$ which is the average size of $A_i$.}}
\label{tab:Rips}
\end{table*} 

\begin{figure*}[!t]
\centering
\resizebox{12cm}{!}{
\begin{tikzpicture}[-,shorten >=0.5pt,auto,node distance=2cm,
 thick,main node/.style={circle,fill=blue!10,draw,font=\sffamily\Huge\bfseries}]
\draw[-] (0,0.5) -- (-0.2,0.5);
\node at (-0.5,0.5) {\large{5.1}};
\foreach \j in {1,...,10}{
\pgfmathsetmacro\k{\j/5+5.1}
\draw[-] (0,0.5+\j) -- (-0.2,0.5+\j);
\node at (-0.5,0.5+\j) {\large{\k}};
}

\foreach \j in {0,...,9}{

\draw[-] (14,1.1*\j) -- (14.2,1.1*\j);
\node at (14.55,1.1*\j) {\large{\j00}};
}
\filldraw[white] (14.45,-0.15) rectangle (14.85,0.15);
\foreach \j in {0,...,7}{

\draw[-] (2*\j,0) -- (2*\j,-0.25);
\node at (2*\j,-0.45) {\large{\j}};
}

\draw[ultra thick, blue
%, rounded corners=3ex
]
 (0,1.25)-- (2,2.95)--  (2.8,3.45)--   (4,4.31) --  (5.4,4.41) --    (6.6,4.43) --   (8,4.44) --  (8.8,4.44)--  (10,4.45) --(12,4.45) --(14,4.45);  

\draw[ultra thick, red
%, rounded corners=3ex
]
 (0,1.1*1.15)-- (2,1.1*3.29)--  (2.8,1.1*4.29)--   (4,1.1*7.23) --  (5.4,1.1*8.39) --    (6.6,1.1*8.6) --   (8,1.1*8.65) --  (8.8,1.1*8.65)--  (10,1.1*8.67) --(12,1.1*8.67) --(14,1.1*8.67);  
 
\draw[ultra thick, green!90!black
%, rounded corners=3ex
]
(0,10.1)--(14,10.1);

 \draw[-,ultra thick] (0,0) rectangle (14,11);
\node at (10,10.5) {\Large\color{green!90!black}$\log_{10}|\ST|$};
\node at (7,4.8) {\Large\color{blue}$\log_{10}|\CSD|$};
\node at (3.3,7) {\Large\color{red}$\Psi$};

\node at (-1.65,10.1) {\large\rotatebox{90}{$\log_{10}|\ST|$}};
\node at (-1.65,3) {\large\rotatebox{90}{$\log_{10}|\CSD|$}};
\node at (15.43,5.5) {\Large$\Psi$};
\node at (7,-0.9) {\large $\log_{10}t$};

\draw[->]
(5.8,-1.2)--(8.2,-1.2);
\draw[->]
(-1.25,1.8)--(-1.25,4.2);
\draw[->]
(-1.25,8.9)--(-1.25,11.3);
\draw[->]
(15.2,4.9)--(15.2,6.1);

%\draw[<->, dotted] (10,4.45)--(10,10.1);

 \end{tikzpicture}
 }
 \caption{Values of $\log_{10}|\ST|$, $\log_{10}|\CSD|$, and $\Psi$ for the simplicial complex generated from above data set with increasing values of $t$. The blue curve corresponds to $\log_{10}|\CSD|$ on the left $y$ axis plotted against $\log_{10}t$ on the $x$-axis. The red curve corresponds to $\Psi$ on the right $y$ axis plotted against $\log_{10}t$ on the $x$-axis. The green curve(line) corresponds to $\log_{10}|\ST|$ on the left $y$ axis plotted against $\log_{10}t$ on the $x$-axis. }
\label{fig:Rips}

 \end{figure*}
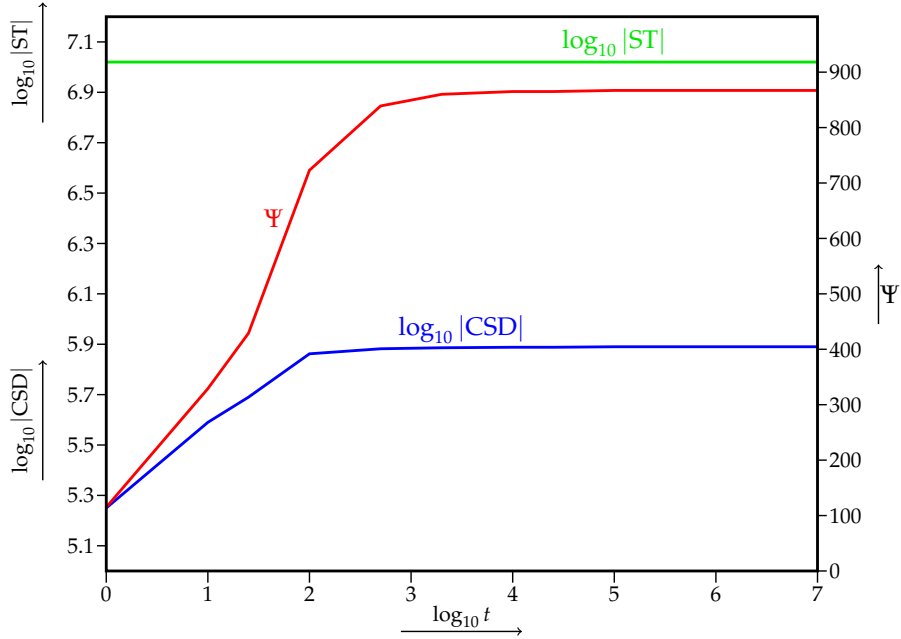
 
We note from Table~\ref{tab:Rips} that $\Gamma_0$ is significantly smaller than $k$, and also that $\Gamma_0^{\text{avg}}$  is  much smaller than $\Gamma_0$.
Also, from Figure~\ref{fig:Rips}, it is clear that there is a gap of an order of magnitude between $|\CSD|$ and $|\ST|$. Next, we note that $\Psi$ is remarkably smaller than $m$ (even notably smaller than $n$). This implies that  all operations can be efficiently implemented using CSD. More importantly, we remark that $\Psi^{\text{avg}}=|\CSD|/n$ is at most $77.8$ in the above experiment. Finally, we observe that despite increasing $t$ at a rapid rate, $|\CSD|$ grows very slowly after $t=100$. This is because the set of all possible filtration values of the Rips complex is small. Therefore, even for small values of $t$ the simplicial complex and its filtration is accurately captured by CSD.

\subsubsection*{Local Sensitivity of the Critical Simplex Diagram}
 It is worth noting that while the cost of basic operations are bounded using 
$\Gamma_0$ and $\Psi$, the actual cost is bounded by parameters such as ${\min}_i\left(|A_{\ell_i}^\star|\right)$, ${\min}_i\left(|A_{\ell_i}|\right)$,  and 
$Z_v$ (the set of all maximal simplices 
that contain the vertex $v$ which has been introduced in the Insertion paragraph in Section~\ref{sec:dynamicoperations}) to get a better estimate on the cost of 
these operations. These parameters are indeed local. To begin with, ${\min}_i\left(|A_{\ell_i}^\star|\right)$ captures the local information about a simplex 
$\sigma$ sharing a vertex with other maximal simplices of the 
complex. More precisely, it is the minimum, over all the vertices of 
$\sigma$, of the largest number of maximal simplices that contain the 
vertex. If $\sigma$ has a vertex which is contained in a few maximal 
simplices then, ${\min}_i\left(|A_{\ell_i}^\star|\right)$ is  small. Similarly, ${\min}_i\left(|A_{\ell_i}|\right)$ is the minimum, over all the vertices of 
$\sigma$, of the largest number of critical simplices that contain the 
vertex. This value depends not only on the structure of the filtration function but also on the filtration range.
Finally, $Z_v$ captures another local 
property of a simplex $\sigma$. Therefore, %we conclude that  
CSD is sensitive to the local structure of the complex. 

\subsubsection*{Robustness in Modification}

We now demonstrate the robustness of CSD, i.e., its ability to perform queries \emph{correctly} and \emph{efficiently} even when it might have stored redundant data such as simplices which are not critical or multiple copies of the same simplex with different filtration values. Consider modifying the filtration value of some simplex $\sigma\in K$ from $f(\sigma)$ to $s_{\sigma}(<f(\sigma))$. In the case of ST, we will have to modify the filtration value inside the node containing $\sigma$ and additionally check (and modify if needed) its faces in decreasing order of dimension. This requires time $\Theta(2^{d_\sigma}d_\sigma(\log n+\log t))$. However, in the case of CSD, we can lazy insert $\sigma$ into CSD in time $\mathcal{O}(d_\sigma\log \Psi)$, and the data structure is robust to such an insertion. This is because, all the operations can be performed correctly and with the same efficiency {up to constant additive factors in the worst case} after a lazy insertion. This is even true if some previously critical simplices need to be removed due to the lazy insertion of $\sigma$. 
For instance, consider the $\texttt{is\_critical}$ query on some simplex $\tau$. If  $\tau$ was  a face of $\sigma$ before modifying $f(\sigma)$ then, the \emph{minimal} filtration value of the nodes in $A_{\tau}$ correctly gives the filtration value of $\tau$ as $s_\sigma$ will now be one of the entries in $A_\tau$. Otherwise, if $\tau$ was not a face of $\sigma$, then the filtration value of $\tau$ remains unchanged, as the lazy insertion of $\sigma$ has not introduced a new simplex, but only a new filtration value to an existing simplex.
Therefore, we can think of using the data structure to manipulate simplicial complexes in very short time through a collection of lazy insertions and perform a clean-up operation at the end of the collection of lazy insertions, or even think of performing the clean-up operation in parallel to the lazy insertions.
We remark here that if we lazy insert $r$ simplices then in the worst case, $\Psi$ grows to $r+\Psi$. In other words,  the presence of redundant simplices, implies that the efficiency will now depend on $r+\Psi$ instead of $\Psi$, but the redundancy will not affect the correctness of the operations.

%\subsection{Analysis of the Critical Simplex Diagram from Discrepancy Theory}

\subsection{{A Sequence of  Representations for Simplicial Complexes and their Filtrations}}

Boissonnat et al.\ \cite{BKT15} in their paper on Simplex Array List described a sequence of data
structures, each more powerful than the previous ones (but also
bulkier). In that sequence of data structures $\langle \Lambda\rangle$, we had $\Lambda_{i}=i\mhyphen \MSD$ ($\MSD$ referred to earlier in this paper is equal to $1\mhyphen\MSD$).
Furthermore, they note that in the $i^{\textrm{th}}$ element of the sequence,
every node which is not a leaf (sink) in the data structure corresponds to a unique $i$-simplex
in the simplicial complex. Also for all $i\mhyphen\MSD$, $i\in\mathbb{N}$,
they state that it is a NFA recognizing all the simplices in the
complex. As one moves along the sequence, the size of the data structure
blows up by a factor of $d$ at each step. But in return, there is a gain in the efficiency of
searching for simplices as the membership query depends on
$\Gamma_i$ which decreases as $i$ increases.

We note here that $\CSD$ described in this paper is exactly the same as $0\mhyphen\MSD$, when $t=0$ (we ignore the structure of the connected component, which is a path in SAL but a star in CSD). Therefore, $\CSD$ supercedes $0\mhyphen\MSD$. There is no change in representation of a simplex between SAL and CSD; instead we only store more simplices (i.e, all critical simplices) in CSD. Therefore, in the same vein as $\langle\Lambda\rangle$, we can define a sequence of data
structures, each more powerful than the previous ones (but also
bulkier). More formally, consider the sequence of data structures
$\langle \Pi\rangle$, where $\Pi_{0}=\CSD$ and $\Pi_i =
\underset{j=0}{\overset{t}{\bigcup}}i\mhyphen\MSD(M_j)$, for all
$i\in\mathbb{N}$. Furthermore, for all $i\in\mathbb{N}$, we will refer to the data structure $\Pi_i$ by the name $i\mhyphen \CSD$ (we will continue to refer to $0\mhyphen\CSD$ as $\CSD$). 
As we move along the sequence $\Pi$, the size of the data structure
blows up by a factor of $d$ at each step. But in return, we gain efficiency in
searching for simplices as the membership query depends on
$\Gamma_i$ which decreases as $i$ increases. Additionally, we gain efficiency in accessing filtration value of a simplex as the complexity no longer depends on $\Psi=\Psi_0$ but on a smaller parameter, $\Psi_i$, which is the maximum number of critical cofaces that any $i$-simplex can have in the complex.

Marc Glisse and Sivaprasad S.\ implemented SAL \cite{MarcSivaprasad} for Data Set mentioned in Section 3.4, and then performed insertion and removal of random simplices, and contracted randomly chosen edges. They observed that 1-SAL outperformed 0-SAL in low dimensions. However, 0-SAL performed better than 1-SAL in higher dimensions.
Therefore, in a similar vein, it would be worth exploring for which class of simplicial complexes, $i\mhyphen$CSD is the best data structure in the CSD family (for every $i\in\mathbb{N}$). 

\section{Construction of Flag Complexes}\label{Flag}
Recall that the flag complex of an undirected graph $G$ is defined as an abstract simplicial complex, whose simplices are the sets of vertices in the cliques of $G$. Let $(P, \|\cdot\|)$ be a metric space where $P$ is a discrete point-set. Given a positive real number $r > 0$, the Rips complex is the abstract simplicial complex $\mathcal{R}^r(P)$ where a simplex  $\sigma\in\mathcal{R}^r(P)$ if and only if $\|p-q\|\le 2r$ for every pair of vertices of $\sigma$. Note that the Rips complex is a special case of a flag complex. % In the light of recent results \cite{AL10,ALS13}, Rips complexes seem
% to be good candidates for reproducing the homotopy type of the
% shape sampled by the point cloud as they admit simple construction and have approximation properties.
Rips filtrations are widely used in Topological Data Analysis since they are easy to compute and they allow to robustly reconstruct the homology of a sample shape via the computation of its persistence diagram~\cite{CCSGGO09}.

In this section, we will only consider a specific filtration for flag complexes which is of significant interest as it includes the Rips filtration. The filtration value of a vertex is $0$. The filtration value of every edge $e$ in the complex, denoted $f(e)$, is given as part of the input. The filtration value of a simplex of higher dimension is equal to the maximum of the filtration values of all the edges in the simplex. 
\subsection{Edge-Deletion Algorithm for Construction of Flag Complexes}\label{Algoconstruction}

Let $G$ be the (weighted) graph of the simplicial complex $K$. Let $\Delta$ denote the maximum degree of the vertices of $G$. To represent $K$ using ST, Boissonnat and Maria \cite{SimplexTree} propose to compute and insert the $\ell$-skeleton of $K$ into ST and to incrementally increase $\ell$ from $1$ to $d$. Therefore, the time to construct  the ST representing the flag complex is $\mathcal{O}(md\log n)$.

To represent $K$ using CSD, we propose an \emph{edge-deletion} algorithm, which is significantly faster than the construction algorithm for ST. We recall that in Section \ref{sec:notations}\ we defined $M_h$ to denote the set of critical simplices in the complex with filtration value $h$. For the following algorithm, we will assume that all the edge weights are distinct.

\paragraph{Preprocessing Step}  We first compute all the maximal cliques in $G$ in time $\mathcal{O}(k\cdot n^\omega)$ \cite{MU04}, where $\omega<2.38$ \cite{L14} is the matrix multiplication exponent, i.e., $n^\omega$ is the time needed to multiply two $n\times n $ matrices. 
We store these maximal simplices in a Prefix Tree (like MxST of \cite{BKT15}). The filtration value given to the edges provides a natural ordering to the edges of the complex. We consider edges in descending order of their filtration value. Let $e_i$ be the edge with the $i^{\text{th}}$ highest filtration value.
Set $i=1$.

\paragraph{Step 1} In this step, we would like to compute $M_{f(e_i)}$ in order to build the CSD of the complex. 
We observe that a clique in $G$ containing the edge $e_i$ is maximal if and only if the vertices of the clique form  a critical simplex in  $M_{f(e_i)}$. In the above observation we used the assumption that all the edge weights in the graph are distinct.
We list all the maximal simplices containing the edge $e_i$ in time $\mathcal{O}\left(\left|M_{f(e_i)}\right|\Delta^\omega\right)$, by using the algorithm presented by  Makino and Uno \cite{MU04} on a subgraph of $G$ in the following way. %First, we define a graph $G^\prime$ isomorphic to $G$, but in which the vertices of the edge $e_i$ are given labels 1 and 2. 
We build an induced subgraph $H$ of $G$ that contains the vertices of edge $e_i$ and all the vertices which are adjacent to \emph{both} vertices of $e_i$. We note that every maximal clique in $H$ is a maximal clique in $G$ containing edge $e_i$, and vice versa. Therefore, if we run Makino and Uno's algorithm on $H$ (which contains at most $\Delta+1$ vertices), we obtain all the maximal cliques in $G$ containing edge $e_i$.

\paragraph{Step 2} Next, we recognize the maximal simplices of $K$ in $M_{f(e_i)}$ in time $\mathcal{O}\left(\left|M_{f(e_i)}\right|d\log n\right)$ by checking each simplex $\sigma$ in $M_{f(e_i)}$ with the Prefix tree built in the preprocessing step in time $\mathcal{O}(d_\sigma \log n)$ per simplex. We remark here that all the simplices in $M_{f(e_1)}$  are maximal simplices in $K$, since $e_1$ has the largest filtration value. 

\paragraph{Step 3} We perform lazy insertion of the simplices in $M_{f(e_i)}$ into the CSD and since we have identified the maximal simplices in $M_{f(e_i)}$, we know whether to insert them in $A_j^\star$ or not, for each $A_j$. This takes time $\mathcal{O}\left(\left|M_{f(e_i)}\right|d\log \Psi\right) =\mathcal{O}\left(\left|M_{f(e_i)}\right|d^2\log n\right)$.

\paragraph{Step 4} Finally, we remove $e_i$ from $G$, increment $i$ by 1, and repeat the procedure from step 1 until $G$ has no edges left.  

This entire construction takes time $\mathcal{O}(|M|(\Delta^\omega+d^2\log n))=\mathcal{O}\left(|M|n^{2.38}\right)$, which is significantly better than that of constructing a representation of $K$ by ST % (which required time $\mathcal{O}(mnd\log n)$),
as $|M|$ can be considerably (exponentially) smaller than $m$.
Finally, we remark that if we lose the distinctness assumption on the edge weights, then some of the maximal cliques of $H$ listed in Step~1 
may not correspond to critical simplices, and thus we will need to perform the operation \texttt{is\_critical} after constructing the CSD and remove the star trees not corresponding to any of the critical simplices. 
\section{Construction of Relaxed Delaunay Complexes}
\label{Delaunay}
Let $Q$ be a finite subset of a metric space $(P, \|\cdot\|)$ where $P$ is a discrete point-set. Given a relaxation parameter $\rho\ge 0$, we define the notion of being `witnessed' as follows. A simplex $\sigma = \{q_0,\ldots , q_{d_\sigma}\}\subseteq Q$ belongs to %relaxed Delaunay complex\footnote{This complex was referred to as the $\rho$-relaxed strong Delaunay complex in \cite{D08} and as the $\rho$-relaxed Delaunay complex in \cite{BDG15}.}  
$\text{Del}^\rho(Q, P)$ \cite{D08,BDG15} if and only if there exists $x\in P$ {that strongly $\rho$-witnesses $\sigma$,} i.e. such that for all $q_i\in \sigma$, and for all $q\in Q$ the following holds:
\begin{align*}
\|x-q_i\| \leq \|x-q\| + \rho.
\end{align*}

The parameter $\rho$ defines a filtration, % on the relaxed Delaunay complexes, 
which has been used in topological data analysis. More explicitly, the filtration value of a simplex $\sigma$ in $\text{Del}^\rho(Q, P)$  is the smallest $\rho^\prime\le \rho$, such that $\sigma$ is in $\text{Del}^{\rho^\prime}(Q, P)$. %If we are maintaining filtration values up to $p$ digits of precision then, we fix $t=\left\lfloor 10^{p}\rho/(10^{\left\lceil\log_{10}\rho \right\rceil})  \right\rfloor$ 
For this entire section, we assume that the filtration range is $\llbracket t\rrbracket$ (obtained after appropriate scaling).

We define a matrix $D$ of size $|P|\times |Q|$ as follows. For every $x\in P$ and $\ell\in [|Q|]$ let $D(x,\ell)$ denote the $\ell^{\text{th}}$ nearest neighbor of $x$ in $Q$ (ties are broken arbitrarily). For every $x\in P$, $i\in\llbracket t\rrbracket$, let $\ell_x^i$ be the largest integer such that $\left|\|x-D(x,1)\|-\|x-D(x,\ell_x^i)\|\right|\le \rho i/t$. Let $\sigma_x^i=\{D(x,1),D(x,2),\ldots ,D(x,\ell_x^i)\}$ and  let $W=\left\{\sigma_x^i\bigl\vert x\in P, i\in \llbracket t \rrbracket\right\}$.
We note that %(recall notation from Section~\ref{sec:notations}), 
the set  $M$  of critical simplices in the complex is contained in $W$.

\begin{lemma}\label{DelaunayStrong}
$M\subseteq W.$
\end{lemma}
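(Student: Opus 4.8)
The plan is to show that every critical simplex of $\text{Del}^\rho(Q,P)$ appears in the family $W = \{\sigma_x^i : x\in P,\ i\in\llbracket t\rrbracket\}$. Let $\sigma\in M$, so $\sigma$ is a critical simplex: it lies in $\text{Del}^{\rho'}(Q,P)$ for $\rho' = f(\sigma)$, and every proper coface $\tau\supset\sigma$ in the complex has $f(\tau) > \rho'$. Since the filtration range is discretized to $\llbracket t\rrbracket$, I will treat $f(\sigma)$ as one of the values $\rho i/t$ for some $i\in\llbracket t\rrbracket$ (this is how the matrix $D$ and the sets $\sigma_x^i$ are set up). Because $\sigma\in\text{Del}^{\rho'}(Q,P)$, there is a witness $x\in P$ such that for all $q_j\in\sigma$ and all $q\in Q$, $\|x-q_j\|\le\|x-q\|+\rho'$. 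In particular, writing the points of $Q$ sorted by distance from $x$ as $D(x,1),D(x,2),\dots$, each vertex $q_j$ of $\sigma$ satisfies $\|x-q_j\|\le\|x-D(x,1)\|+\rho'$, hence $q_j\in\{D(x,1),\dots,D(x,\ell_x^i)\}=\sigma_x^i$ where $i$ is chosen with $\rho i/t = \rho' = f(\sigma)$ (or the least such $i$ with $\rho i/t\ge f(\sigma)$). Thus $\sigma\subseteq\sigma_x^i$.

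The remaining point is to upgrade the inclusion $\sigma\subseteq\sigma_x^i$ to equality, which is exactly where criticality is used. First I would observe that $\sigma_x^i$ itself belongs to $\text{Del}^\rho(Q,P)$ and in fact has filtration value at most $\rho i/t = f(\sigma)$: indeed $x$ witnesses all of $\sigma_x^i$ at relaxation level $\rho i/t$ by the very definition of $\ell_x^i$ (every $D(x,\ell)$ with $\ell\le\ell_x^i$ is within $\rho i/t$ of $\|x-D(x,1)\|$, hence within $\rho i/t$ of being a nearest neighbor). Now suppose for contradiction $\sigma\subsetneq\sigma_x^i$. Then $\sigma_x^i$ is a proper coface of $\sigma$ lying in the complex with $f(\sigma_x^i)\le f(\sigma)$, contradicting the assumption that $\sigma$ is critical (which requires $f(\sigma) < f(\tau)$ for \emph{every} coface $\tau$). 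Therefore $\sigma = \sigma_x^i\in W$, completing the proof.

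The main obstacle I anticipate is the bookkeeping around the discretization of filtration values: one must be careful that $f(\sigma)$, defined as the smallest $\rho'\le\rho$ with $\sigma\in\text{Del}^{\rho'}(Q,P)$, is realized (or conservatively over-approximated) by one of the grid values $\rho i/t$, and that the witness $x$ realizing membership at level $f(\sigma)$ is compatible with the nearest-neighbor ordering used to define $\sigma_x^i$ (ties in distances broken arbitrarily but consistently in the matrix $D$). Once one fixes the convention that $i = \lceil t\, f(\sigma)/\rho\rceil$ and notes that relaxed Delaunay membership is monotone in the relaxation parameter, the argument goes through; the criticality hypothesis then does all the real work in forcing $\sigma$ to be the \emph{full} witnessed simplex $\sigma_x^i$ rather than a proper subset. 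A secondary subtlety is handling the degenerate case where several candidate witnesses $x$ give different $\sigma_x^i$ containing $\sigma$; but this causes no difficulty, since we only need \emph{one} such $x$ for which equality holds, and criticality rules out proper containment for every coface simultaneously, in particular for each $\sigma_x^i$ that contains $\sigma$.
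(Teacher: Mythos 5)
Your proof is correct and follows essentially the same route as the paper's: use the witness $x$ at relaxation level $\rho f(\sigma)/t$ to get $\sigma\subseteq\sigma_x^i$, then invoke criticality to force equality. Your phrasing of the equality step is slightly cleaner than the paper's --- you observe in one shot that $x$ witnesses the \emph{entire} simplex $\sigma_x^i$ at relaxation $\rho i/t$ (since every $D(x,\ell)$ with $\ell\le\ell_x^i$ is within $\rho i/t$ of the nearest neighbor, and that is the binding constraint in the witness condition), so $\sigma_x^i$ is a coface of $\sigma$ in the complex with $f(\sigma_x^i)\le f(\sigma)$, which criticality forbids unless $\sigma_x^i=\sigma$. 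The paper instead asserts a pointwise separation inequality for each $q\notin\tau$; your version makes the role of criticality more transparent and avoids a small notational slip in the paper's displayed inequality. The discretization bookkeeping you flag is handled by the paper's convention that $f(\sigma)\in\llbracket t\rrbracket$ is an integer with corresponding relaxation parameter $\rho f(\sigma)/t$, so taking $i=f(\sigma)$ directly (rather than a ceiling) is the intended reading.
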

\begin{proof}
Let $\tau=\{v_0,\ldots v_{d_\tau}\}$ be a critical simplex in $M$ with filtration value $f(\tau)$. By definition of $\text{Del}^\rho(Q, P)$, we have that there exists a point $x\in P$ which $\left(f(\tau)\rho/t\right)$-strongly witnesses  $\tau$. %By definition of $\text{Del}^\rho(Q, P)$, we have that there exists a point $p\in P$ such that the following hold.
%$$\forall a\in \overline{[d_\tau]},\forall q\in Q\setminus \tau,\  \|x-v_a\|\le \|x-q\|+\rho f(\tau)/t.$$
%$$\forall a,b\in \overline{[d_\tau]},\  \left|\|x-v_a\|-\|x-v_b\|\right|\le \rho f(\tau)/t.$$
Since $\tau$ is critical, we have that for every $q\in Q\setminus \tau$, the following holds:
$$\forall i\in\llbracket d_\tau\rrbracket,\  \left|\|x-v_i\|-\|x-q\|\right|> \rho f(\tau)/t.$$
Therefore, we have that for every $i\in [d_\tau+1]$, $D(x,\ell_x^i)\in \tau$, or more precisely, $\sigma_x^{d_\tau+1}=\tau$.
\end{proof}

The above lemma provides a characterization of $\text{Del}^\rho(Q, P)$: it can have at most $|P|(d+1)$ critical simplices (where $d$ is the dimension of $\text{Del}^\rho(Q, P)$). We note here that typically $P$ is a relatively small set. For example, in the experiments performed by Boissonnat and Maria (Table 1 of \cite{SimplexTree}), we note that the cardinality of $P$ is about a few ten thousands while the number of simplices in the complex is over a hundred million. Therefore, this provides practical evidence of the compact representation of $\text{Del}^\rho(Q, P)$ through CSD.

Under the assumption that for any $x,\ell$,  $D(x,\ell)$ could be computed in $\mathcal{O}(1)$ time (i.e., $D$ is computed as part of the preprocessing), Boissonnat and Maria \cite{SimplexTree} described an algorithm to construct the ST representation of the relaxed witness complex. Their algorithm can be easily adapted to construct $\text{Del}_w^\rho(Q, P)$ in time $\mathcal{O}(tmd\log n)$.

In the case of CSD, we propose a new \emph{matrix-parsing} algorithm which builds $\text{Del}^\rho(Q, P)$ in time $\mathcal{O}(|P|d^2\log \Psi)$ (assuming an oracle to access $D$). 
It is easy to see that all the simplices in $W$ can be constructed in $\mathcal{O}(|W|d\log n)=\mathcal{O}(|P|d^2\log n)$ time by sequentially computing the simplices $\sigma_x^i$ for all the $x\in P$, i.e., by parsing the matrix $D$ one row at a time.  We lazy insert all the simplices in $W$ to the CSD in time $\mathcal{O}(|P|d^2\log \Psi)$. We finish the construction by performing a clean-up operation to remove the redundant simplices (i.e., non-critical simplices) that were inserted. 

\section{Conclusion}\label{Conclusion}
In this paper, we introduce a new data structure called the Critical Simplex Diagram (CSD) to represent filtrations of simplicial complexes. In this data structure, we store only those simplices which are critical with respect to the filtration value, i.e., we store a simplex if and only if all its cofaces are of a (strictly) higher filtration value than the filtration value of the simplex itself. We then show how to efficiently perform basic operations on simplicial complexes by only storing these (critical) simplices. This is summarized in Table~\ref{tab:OperationsonMSD}. Finally, we showed how to (quickly) construct the CSD representation of flag complexes and relaxed Delaunay complexes. 

{As a future direction of research, we would like to develop algorithms for computing persistent homology of a filtration using the CSD representation.} This is a very important question, as computing persistence is at the heart of topological data analysis. On the other hand, it is open to obtain better bounds on $\Psi$ and $\Gamma_i$ for specific  complexes such as the Rips complex or the relaxed Delaunay complex  by assuming some notion of geometric regularity. Also, it would be interesting to obtain lower bounds on the various query times (such as membership, insertion/removal), by assuming an optimal storage of $\mathcal{O}(\kappa d\log n)$ ($\kappa=|M|$ is the number of critical simplices).  From the standpoint of practice, we would like to find fast construction algorithms under the CSD representation for other simplicial complexes of interest such as the alpha complex and the relaxed witness complex.
Finally, we would like to implement this data structure and check its performance versus the Simplex Tree in practice.

\subsection*{Acknowledgements}
We would like to thank the anonymous reviewers whose comments helped us improve the presentation of the paper. 

\bibliographystyle{alpha}
\bibliography{References}

\newcommand{\etalchar}[1]{$^{#1}$}
\begin{thebibliography}{CCG{\etalchar{+}}09}

\bibitem[ALS12]{DataStructure3}
Dominique Attali, Andr{\'{e}} Lieutier, and David Salinas.
\newblock Efficient data structure for representing and simplifying simplicial
  complexes in high dimensions.
\newblock {\em Int. J. Comput. Geometry Appl.}, 22(4):279--304, 2012.

\bibitem[BDG15]{BDG15}
Jean{-}Daniel Boissonnat, Ramsay Dyer, and Arijit Ghosh.
\newblock A probabilistic approach to reducing algebraic complexity of delaunay
  triangulations.
\newblock In {\em Algorithms - {ESA} 2015 - 23rd Annual European Symposium,
  Patras, Greece, September 14-16, 2015, Proceedings}, pages 595--606, 2015.

\bibitem[BKT17]{BKT15}
Jean{-}Daniel Boissonnat, {Karthik {C. S.}}, and S{\'{e}}bastien Tavenas.
\newblock Building efficient and compact data structures for simplicial
  complexes.
\newblock {\em Algorithmica}, 79(2):530--567, 2017.

\bibitem[BM14]{SimplexTree}
Jean{-}Daniel Boissonnat and Cl{\'{e}}ment Maria.
\newblock The simplex tree: An efficient data structure for general simplicial
  complexes.
\newblock {\em Algorithmica}, 70(3):406--427, 2014.

\bibitem[CCG{\etalchar{+}}09]{CCSGGO09}
Fr{\'{e}}d{\'{e}}ric Chazal, David Cohen{-}Steiner, Marc Glisse, Leonidas~J.
  Guibas, and Steve Oudot.
\newblock Proximity of persistence modules and their diagrams.
\newblock In {\em Proceedings of the 25th {ACM} Symposium on Computational
  Geometry, Aarhus, Denmark, June 8-10, 2009}, pages 237--246, 2009.

\bibitem[DFW14]{DFW14}
Tamal~K. Dey, Fengtao Fan, and Yusu Wang.
\newblock Computing topological persistence for simplicial maps.
\newblock In {\em 30th Annual Symposium on Computational Geometry, SOCG'14,
  Kyoto, Japan, June 08 - 11, 2014}, page 345, 2014.

\bibitem[dS08]{D08}
Vin de~Silva.
\newblock A weak characterisation of the delaunay triangulation.
\newblock {\em Geometriae Dedicata}, 135(1):39--64, 2008.

\bibitem[Dur03]{D03}
Guillermo Dur\'an.
\newblock Some new results on circle graphs.
\newblock {\em Matem\'atica Contempor\^anea}, 2003.

\bibitem[EH10]{EH10}
Herbert Edelsbrunner and John Harer.
\newblock {\em Computational Topology - an Introduction}.
\newblock American Mathematical Society, 2010.

\bibitem[ELS10]{ELS10}
David Eppstein, Maarten L{\"{o}}ffler, and Darren Strash.
\newblock Listing all maximal cliques in sparse graphs in near-optimal time.
\newblock In {\em Algorithms and Computation - 21st International Symposium,
  {ISAAC} 2010, Jeju Island, Korea, December 15-17, 2010, Proceedings, Part
  {I}}, pages 403--414, 2010.

\bibitem[Gal14]{L14}
Fran{\c{c}}ois~Le Gall.
\newblock Powers of tensors and fast matrix multiplication.
\newblock In {\em International Symposium on Symbolic and Algebraic
  Computation, {ISSAC}'14, Kobe, Japan, July 23-25, 2014}, pages 296--303,
  2014.

\bibitem[Gav74]{G74}
Fanica Gavril.
\newblock Algorithms on circular-arc graphs.
\newblock {\em Networks}, 4(4):357--369, 1974.

\bibitem[GKS13]{GKS13}
Martin Grohe, Stephan Kreutzer, and Sebastian Siebertz.
\newblock Characterisations of nowhere dense graphs (invited talk).
\newblock In {\em {IARCS} Annual Conference on Foundations of Software
  Technology and Theoretical Computer Science, {FSTTCS} 2013, December 12-14,
  2013, Guwahati, India}, pages 21--40, 2013.

\bibitem[Gol80]{G80}
Martin~Charles Golumbic.
\newblock {\em Algorithmic graph theory and perfect graphs}.
\newblock Computer science and applied mathematics. Academic Press, New York,
  1980.

\bibitem[GS]{MarcSivaprasad}
M.~Glisse and S.~Sivaprasad.
\newblock Private communication.

\bibitem[Hed73]{H73}
Stephen~T. Hedetniemi.
\newblock Hereditary properties of graphs.
\newblock {\em Journal of Combinatorial Theory, Series B}, 14(1):94 -- 99,
  1973.

\bibitem[MU04]{MU04}
Kazuhisa Makino and Takeaki Uno.
\newblock New algorithms for enumerating all maximal cliques.
\newblock In {\em Algorithm Theory - {SWAT} 2004, 9th Scandinavian Workshop on
  Algorithm Theory, Humlebaek, Denmark, July 8-10, 2004, Proceedings}, pages
  260--272, 2004.

\bibitem[Pri95]{P95}
Erich Prisner.
\newblock Graphs with few cliques.
\newblock In {\em 7th Quadrennial International Conference on the Theory and
  Applications of Graphs, Graph Theory, Combinatorics, and Applications}, pages
  945--956, 1995.

\bibitem[Pro15]{GUDHI}
The~GUDHI Project.
\newblock {\em GUDHI User and Reference Manual}.
\newblock GUDHI Editorial Board, 2015.

\bibitem[RS07]{RS07}
Bill Rosgen and Lorna Stewart.
\newblock Complexity results on graphs with few cliques.
\newblock {\em Discrete Mathematics {\&} Theoretical Computer Science}, 9(1),
  2007.

\bibitem[Spi03]{S03}
J.P. Spinrad.
\newblock {\em Efficient Graph Representations.: The Fields Institute for
  Research in Mathematical Sciences.}
\newblock Fields Institute monographs. American Mathematical Soc., 2003.

\end{thebibliography}
\end{document}